\newtheorem{thm}{Theorem}
\newtheorem{lem}[thm]{Lemma}
\newtheorem{cor}[thm]{Corollary}
\begin{document}

\mainmatter  % start of an individual contribution

% first the title is needed
\title{Dijkstra Graphs}

% a short form should be given in case it is too long for the running head
\titlerunning{Dijkstra Graphs}

% the name(s) of the author(s) follow(s) next
%
% NB: Chinese authors should write their first names(s) in front of
% their surnames. This ensures that the names appear correctly in
% the running heads and the author index.
%
\author{Lucila M. S. Bento%%
$^1$
\and Davidson R. Boccardo$^{6,7}$\and Raphael C. S. Machado$^3$\and\\ Fl\'avio K. Miyazawa$^5$\and Vin\'icius G. Pereira de S\'a$^2$\and
Jayme L. Szwarcfiter$^{1,2,4}$}
\authorrunning{Dijkstra Graphs}
% (feature abused for this document to repeat the title also on left hand pages)

% the affiliations are given next; don't give your e-mail address
% unless you accept that it will be published
\institute{$^1$COPPE-Sistemas -- Universidade Federal do Rio de Janeiro \\
$^2$Instituto de Matem\'atica -- Universidade Federal do Rio de Janeiro \\
$^3$Instituto Nacional de Metrologia, Qualidade e Tecnologia -- Inmetro \\
$^4$Instituto de Matem\'atica e Estat\'istica -- Universidade do Estado do Rio de Janeiro \\
$^5$Instituto de Computa\c c\~ao -- Universidade Estadual de Campinas \\
$^6$Clavis Seguran\c ca da Informa\c c\~ao \\
$^7$Green Hat Seguran\c ca da Informa\c c\~ao\\
E-mails: lucilabento@ppgi.ufrj.br, davidson@clavis.com.br, rcmachado@inmetro.gov.br, fkm@ic.unicamp.br, vigusmao@dcc.ufrj.br, jayme@nce.ufrj.br}

%
% NB: a more complex sample for affiliations and the mapping to the
% corresponding authors can be found in the file "llncs.dem"
% (search for the string "\mainmatter" where a contribution starts).
% "llncs.dem" accompanies the document class "llncs.cls".
%

%\toctitle{Lecture Notes in Computer Science}
%\tocauthor{Authors' Instructions}
\maketitle

\begin{abstract}
We revisit a concept that has been central in some early stages of computer science, that of {\it structured programming}: a set of rules that an algorithm must follow in order to acquire a structure that is desirable in many aspects. While much has been written about structured programming, an important issue has been left unanswered: given an arbitrary, compiled program, describe an algorithm to decide whether or not it is \emph{structured}, that is, whether it conforms to the stated principles of structured programming. We refer to the classical concept of structured programming, as described by Dijkstra. By employing a graph model and graph-theoretic techniques, we formulate an efficient algorithm for answering this question. To do so, we first introduce the class of graphs which correspond to structured programs, which we call {\it Dijkstra Graphs}. Our problem then becomes the recognition of such graphs, for which we present a greedy $O(n)$-time algorithm. Furthermore, we describe an isomorphism algorithm for Dijkstra graphs, whose complexity is also linear in the number of vertices of the graph. Both the recognition and isomorphism algorithms have potential important applications, such as in code similarity analysis.  
\keywords{graph algorithms, graph isomorphism, reducibility, structured programming}
\end{abstract}

\section{Introduction}
Structured programming was one of the main topics in computer science in the years around 1970. 
It can be viewed as a method for the development and description of algorithms and programs.    
Basically, it consists of a top-down formulation of the algorithm, breaking it into blocks or modules. 
The blocks are stepwise refined, possibly generating new, smaller blocks, until refinements no longer exist. 
The technique constraints the description of the modules to contain only three basic control structures: \emph{sequence}, \emph{selection} and \emph{iteration}. 
The first of them corresponds to sequential statements of the algorithm; the second refers to comparisons leading to different outcomes; the last one corresponds to sets of actions performed repeatedly in the algorithm.  

One of the early papers about structured programming was the article by Dijkstra ``Go-to statement
considered harmful''~\cite{D68}, which brought the idea that the unrestricted use of go-to statements is incompatible with well structured algorithms. 
That paper was soon followed by a discussion in the literature about go-to's, as in the papers by Knuth and Floyd~\cite{KF71}, Wulf~\cite{Wu72} and Knuth~\cite{K74}. 
Other classical papers are those by Dahl and Hoare~\cite{DH72}, Hoare~\cite{Ho72} and Wirth~\cite{Wi71}, among others. 
Guidelines of structured programming were established in an article by Dijkstra~\cite{D72}.  The early development of programming languages containing blocks, such as ALGOL (Wirth~\cite{Wi71a}) and PASCAL (Naur~\cite{Na60}), was an important reason for structured programming's widespread adoption. This concept has been then further developed in papers by Kosaroju~\cite{Ko74},  describing the idea of reducibility among flowcharts. Moreover, ~\cite{Ko74}  has introduced and characterized the class of \emph{D-charts}, which in fact are graphs properly containing all those which originate from structured programming. Williams~\cite{Wi83} also describes variations of different forms of structuredness, including the basic definitions by Dijkstra, as well as D-charts.    
 The different forms of unstructuredness were described in papers by Williams~\cite{Wi77} and McCabe~\cite{Mc76}. The conversion of a unstructured flow diagram into a structured one has been considered by Williams and Ossher~\cite{WiOs78}, and Oulsnam~\cite{Ou82}. Formal aspects of structured programming include the papers by B\"ohm and Jacopini~\cite{BoJa66}, Harel~\cite{Ha80}, and Kozen and Tseng~\cite{KoTs08}. A mathematical theory for modeling structuredness, designed for flow graphs, in general, has been described by Fenton, Whitty and Kaposi~\cite{FWK85}.   
The actual influence of the concept of structured programming in the development of algorithms for solving various problems in different areas occurred right from the start, either explicitly, as in the papers by Henderson and Snow~\cite{HS72}, and Knuth and Szwarcfiter~\cite{KS74}, or implicitly as in the various graph algorithms by Tarjan, e.g.~\cite{Ta72,Ta74}.   
 
A natural question regarding structured programming is to recognize whether a given program is structured. 
To our knowledge, such a question has not been solved neither in the early stages of structured programming, nor later. 
That is the main purpose of the present paper. 
We formulate an algorithm for recognizing whether a given program is structured, according to Dijkstra's concept of structured programming. Note that the input comprises the binary code, not the source code. 
A well-known representation that comes in handy is that of the \emph{control  graph} (CFG) of a program, employed by the majority of reverse-engineering tools to perform data-flow analysis and optimizations. A CFG represents the intraprocedural computation of a function by depicting the existing links across its basic blocks. Each basic block represents a straight line in the program's instructions, ending (possibly) with a branch. 
%The control flow enters at the beginning of a block and exits at the end. 
An edge $A \rightarrow B$ (from the exit of block A to the start of block B) represents the program flowing from A to B at runtime. 

We are then interested in the version of the recognition problem which takes as input a control flow graph of the program~\cite{AC76,CD78}: a directed graph representing the possible sequences of basic blocks along the execution of the program. Our problem thus becomes  graph-theoretic: given a control flow graph, decide whether it has been produced by a structured program. We apply a reducibility method, whose reduction operations iteratively obtain smaller and smaller control flow graphs.    

%

%This work brings about the latest results of the research these authors have been conducting with aims at solving software security %problems through applications of graph theory and graph algorithms (see, for instance,~\ref{BBMPS13}). 
%We study the class of control flow graphs which corresponds to structured programs. 
%We give both a characterization of this
%class and a polynomial-time recognition algorithm, which is able to
%identify 

In this paper, we first define the class of graphs which correspond to structured programs, as considered by Dijkstra in~\cite{D72}. Such a class has then been named as {\it Dijkstra graphs}.
We describe a characterization that leads to a greedy $O(n)$ time recognition algorithm for a Dijkstra graph with $n$ vertices.
%In case the input graph $G$ is not a Dijkstra graph, the algorithm identifies the unique minimum subgraph of $G$ that is not a Dijkstra graph. 
%That is, the largest subgraph, not containing any non-trivial Dijkstra graphs. 
Among the potential direct applications of the proposed algorithm, we can mention software watermarking via control flow graph modifications~\cite{BBMPS13,CKCT03}.

Additionally, we formulate an isomorphism algorithm for the class of Dijkstra graphs. The method consists of defining a convenient code for a graph of the class, which consists of a string of integers. Such a code uniquely identifies the graph, and it is shown that two Dijkstra graphs are isomorphic if and only if their codes are the same. The code itself has size $O(n)$ and the time complexity of the isomorphism algorithm is also $O(n)$. In case the given graphs are isomorphic, the algorithm exhibits the isomorphism function between the graphs. Applications of isomorphism include code similarity analysis~\cite{Co10}, since the method can determine whether apparently distinct control flow graphs (of structured programs) are actually structurally identical, with potential implications in digital rights management.

%Throughout the text, we allow ourselves to use the terms {\it algorithm, program} and {\it flowchart} as being equivalent. 
%In the appendix, the reader will find additional comments and the omitted proofs.

%In the next section, we describe some basic definitions, relevant to this work. In Section~\ref{graphsSP}, we present the class of Dijkstra graphs, while Section~\ref{recognition}  describes a suitable characterization and recognition algorithm for the class. Further, the section~\ref{sec:isomorphism} formulates the isomorphism algorithm for Dijkstra graphs. At end, in Section~\ref{conclusions}, we comment on generalizations of Dijkstra graphs, so that to include some other structures which could be added to the basic Dijkstra's control structures, making it more general and flexible, without destroying the basic structure of blocks and stepwise refinement of structured programming.

%Finally, 
%We remark that this is paper is basically graph theoretic, being motivated by computer programming. Most of the proofs have been omitted %owing to space restrictions.

\section{Preliminaries}\label{preliminaries}

In this paper, all graphs are finite and directed.
For a graph $G$, we denote its vertex and edge sets by $V(G)$ and $E(G)$, respectively, with $|V(G)| = n$, $|E(G)| = m$. For $v,w \in V(G)$, an edge from $v$ to $w$ is written as $vw$.
We say $vw$ is an {\it out-edge} of $v$ and an {\it in-edge} of $w$, with $w$ an {\it out-neighbor} of $v$, and $v$ an {\it in-neighbor} of $w$. 
We denote by  $N^+_G(v)$ and $N^-_G(v)$ the sets of out-neighbors and in-neighbors of $v$, respectively. We may drop the subscript when the graph is clear from the context. Also, we write $N^{2+}(v)$ meaning $N^+(N^+(v))$.  For $v,w \in V(G)$, $v$ {\it reaches} $w$ when there is a path in $G$ from $v$ to $w$. A  {\it source} of $G$ is a vertex
that reaches all other vertices in $G$, while a {\it sink} is one which reaches no vertex, except itself. 
 Denote by $s(G)$ and $t(G)$, respectively, a source and a sink of $G$.   A {\it (control) flow} graph $G$ is one which contains a distinguished source $s(G)$. A {\it source-sink} graph contains both a distinguished source $s(G)$ and distinguished sink $t(G)$. A {\it trivial} graph contains a single vertex. 

A graph with no directed cycles is called {\it acyclic}. In an acyclic graph if there is a path from vertex $v$ to vertex $w$, then $v$ is an ancestor of $w$, and the latter a descendant of $v$. Additionally, if $v,w$ are distinct then $v$ is a {it proper ancestor}, and $w$ a {\it proper descendant}. 
Let $G$ be a flow graph with source $s(G)$, and $C$ a cycle of $G$. The cycle $C$ is called a {\it single-entry cycle} if it contains a vertex $v \in C$ that separates $s(G)$ from the vertices of $C \setminus \{v\}.$ A flow graph in which each of its cycles is a single-entry cycle is called {\it reducible}. Reducible graphs were characterized by Hecht and Ullman~\cite{HU72,HU74}. An efficient recognition algorithm for this class has been described by Tarjan~\cite{Ta74a}. 

In a {\it depth-first search (DFS)} of a directed graph, in each step a vertex is inserted in a stack, or  removed from it. Every vertex is inserted and removed from the stack exactly once. An edge $vw \in E(G)$, such that $v$ is inserted in the stack after $w$, and before the removal of $w$, is called a {\it cycle edge}. Let $C$ be the set of cycle edges of a graph, relative to some DFS. Clearly, the graph $G - C$ is acyclic. The following characterization if reducible flow graphs is relevant for our purposes.  

\begin{thm}\cite{HU74,Ta74a} A flow graph $G$ is reducible if and only if, for any depth-first search of $G$ starting from $s(G)$, the set of cycle edges is invariant.  
\end{thm}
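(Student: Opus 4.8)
The plan is to prove both implications, taking as the central invariant the \emph{dominator relation}: say that $w$ \emph{dominates} $v$ if every path in $G$ from $s(G)$ to $v$ passes through $w$. This relation depends only on $G$ and its source, not on any search, so the edge set $D = \{vw \in E(G) : w \text{ dominates } v\}$ is manifestly the same for every DFS. My strategy is to show that reducibility is precisely the condition under which the cycle edges of an arbitrary DFS coincide with $D$; invariance of $D$ then gives invariance of the cycle edges.

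For the forward direction I would fix an arbitrary DFS from $s(G)$ under the assumption that $G$ is reducible, and prove that its cycle-edge set equals $D$. The inclusion $D \subseteq \{\text{cycle edges}\}$ holds in \emph{any} flow graph: if $w$ dominates $v$ and $vw \in E(G)$, then the tree path from $s(G)$ to $v$ must pass through $w$, so $w$ is a proper ancestor of $v$ and $vw$ is a cycle edge. The reverse inclusion is where reducibility enters. Suppose $vw$ is a cycle edge; then $w$ is an ancestor of $v$, and the tree path from $w$ to $v$ together with $vw$ forms a cycle $C$ whose vertices all lie on a single root-to-node branch of the DFS tree, with $w$ as their common topmost ancestor. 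Reducibility forces $C$ to be single-entry, so its entry vertex $u$ separates $s(G)$ from $C \setminus \{u\}$; that is, $u$ dominates every other vertex of $C$ and is therefore a tree-ancestor of each of them. Since $w$ is the unique vertex of $C$ that is an ancestor of all the others, we must have $u = w$, whence $w$ dominates $v$ and $vw \in D$.

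For the converse I would argue by contraposition: if $G$ is not reducible I exhibit two depth-first searches with distinct cycle-edge sets. Irreducibility supplies a cycle $C$ that is not single-entry, so no single vertex separates $s(G)$ from the rest of $C$; an easy argument then gives at least two \emph{entry} vertices $p \ne q$, each reachable from $s(G)$ by a path meeting $C$ only at its endpoint (were there a unique entry vertex, it would be a separator). Writing $C$ as $x_1 \to x_2 \to \cdots \to x_k \to x_1$ with $p,q$ among the $x_t$, I build a first search that follows the $s(G)$-to-$p$ path, discovering $p$ before any other vertex of $C$, and then traverses $C$ cyclically; this makes the edge of $C$ entering $q$ a tree edge, while the edge entering $p$ is the one cycle edge that $C$ contributes. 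A symmetric search reaching $q$ first makes the edge entering $q$ a cycle edge. Since $p \ne q$, the edge entering $q$ is a cycle edge of the second search but not of the first, so the two cycle-edge sets differ.

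The main obstacle is justifying the two tailored searches in the converse, namely that the freedom to order each vertex's out-edges really does let a DFS realize the claimed tree: reach the prescribed entry vertex first and then walk around $C$ without prematurely visiting its other vertices. This is exactly where the entry-vertex property is used, since the chosen $s(G)$-to-$p$ (respectively $s(G)$-to-$q$) path avoids $C$ except at its final vertex, so ordering adjacency lists to prefer first that path and then the forward cycle edges forces the desired discovery order. A secondary point to check carefully in the forward direction is the identification $u = w$; it rests on the two small facts that domination implies tree-ancestry and that the vertices of $C$ form a chain in the DFS tree, so that $w$, the topmost vertex of that chain, is the only vertex capable of dominating all of $C$.
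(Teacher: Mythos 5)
The paper never proves this theorem: it is quoted as a known result from Hecht--Ullman and Tarjan \cite{HU74,Ta74a} and used as a black box, so there is no internal proof to compare yours against; what follows is an assessment of your argument on its own merits. Your proof is correct, and it is essentially the classical dominator-based argument that underlies those references: the DFS-independent set $D$ of edges whose head dominates their tail is exactly the characterization of the cycle (back) edges of a reducible flow graph that makes Tarjan's reducibility test work. Both directions hold up under scrutiny. In the forward direction, the containment $D \subseteq \{\text{cycle edges}\}$ is the standard fact that a dominator of $v$ lies on every DFS tree path to $v$, and your key identification $u = w$ is sound: the entry $u$ dominates every other vertex of $C$, hence is a tree-ancestor of each of them, yet $u$ also lies on the tree chain descending from $w$, and antisymmetry of tree-ancestry forces $u = w$, so $w$ dominates $v$. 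In the converse, the two-entries claim is justified exactly as you indicate (truncate, at its first meeting with $C$, a path witnessing that a given entry fails to separate --- this path avoids that entry, so the truncation point is a second entry), and the tailored adjacency orderings do realize the claimed searches, because a DFS exhausts the subtree of the preferred out-edge before touching any other out-edge, so the search reaches $p$ (respectively $q$) before any other vertex of $C$ and then walks the cycle as a chain of tree edges with a single back edge; consequently the edge of $C$ entering $q$ is a tree edge in one search and a cycle edge in the other. The only loose end, cosmetic rather than substantive, is the degenerate case of self-loops, where the paper's stack-based definition of cycle edge is vacuous and your set $D$ would contain the loop; both the statement and your argument implicitly assume simple cycles, as do the cited sources.
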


In a flow graph graph $G$, we may write DFS of $G$, as to mean a DFS of $G$ staring from $s(G)$. In addition, if $G$ is  also reducible, based of the above theorem, we may use the terms {\it ancestor} or {descendant} of $G$, as to mean {\it ancestor} or {descendant} of $G-C$, where $C$ is the (unique) set of cycle edges of $G$.  

A {\it topological sort} of a graph $G$ is a sequence $v_1, \ldots, v_n$ of its vertices, such that $v_iv_j \in E(G)$ implies $i < j$. It is well known that $G$ admits a topological sort if and only if $G$ is acyclic. 
Finally, two graphs $G_1, G_2$ are {\it isomorphic} when there is a one-to-one correspondence $f: V(G_1) \cong V(G_2)$ such that $vw \in E(G_1)$ if and only if $f(v)f(w) \in E(G_2)$. In this case,  write $G_1 \cong G_2$, and  call $f$ an {\it isomorphism function} between $G_1,G_2$, with $f(v)$ being the {\it image} of $v$ under $f$.
 
\section{The Graphs of Structured Programming}\label{graphsSP}
In this section, we describe the graphs of structured programming, as established by Dijkstra in~\cite{D72}, leading to the definition  of {\it Dijkstra graphs}. First, we introduce a family of graphs directly related to Dijkstra's concepts of structured programming.

A {\it statement graph} is defined as being one of the following:
\begin{enumerate}[(a)]
	\item {\it trivial} graph
	\item {\it sequence} graph
	\item {\it if} graph
	\item {\it if-then-else} graph
	\item {\it p-case} graph, $p \geq 3$
	\item {\it while} graph
	\item {\it repeat} graph
\end{enumerate}

%See Figures~\ref{statementgraphs1} and~\ref{statementgraphs2}, where $v$ denotes the source in each case. Note that all statement graphs are source-sink.

For our purposes, it is convenient to assign labels to the vertices of statement graphs as follows. Each vertex is either an {\it expansible vertex}, labeled $X$, or a {\it regular vertex}, labelled $R$. See Figures~\ref{statementgraphs1} and~\ref{statementgraphs2}, where the statement graphs are depicted with the corresponding vertex labels. 
%The labels are initially fixed, as shown in the figures. However, $X$ labels can be changed to $R$, in the dynamic process described in this section. 
All statement graphs are source-sink. Vertex $v$ denotes the source of the graph in each case.

\begin{figure}\label{statementgraphs1}
\centering 
\includegraphics[scale=0.5]{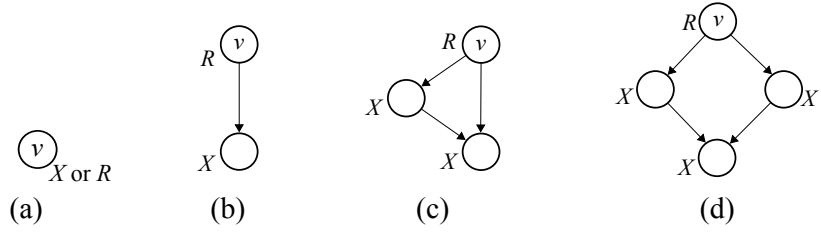}
\caption{Statement graphs (a)-(d)}
\label{statementgraphs1}
\end{figure}

\begin{figure}\label{statementgraphs2}
\centering
\includegraphics[scale=0.5]{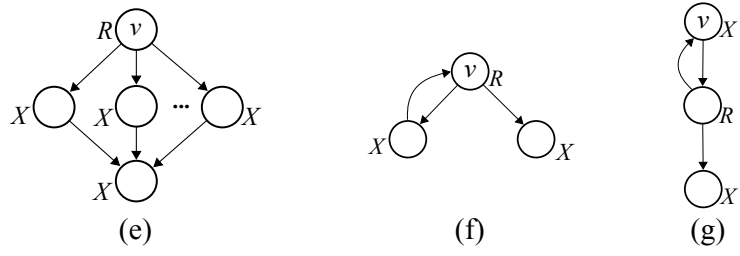}
\caption{Statement graphs (e)-(g)}
\label{statementgraphs2}
\end{figure}

Let $G$ be an unlabeled reducible graph, and $H$ a subgraph of $G$, having source $s(H)$ and sink $t(H)$. We say $H$ is {\it closed} when
\begin{itemize}
	\item $v \in V(H) \setminus s(H)  \Rightarrow N^-(v) \subseteq
	V(H)$;
	\item $v \in V(H) \setminus t(H) \Rightarrow N^+(v)
	\subseteq V(H)$; and 
	\item $v s(H)$ is a cycle edge $\Rightarrow v \in N^+(s(H))$.
\end{itemize}

In this case, $s(H)$ is the only vertex of $H$ having possible  in-neighbors outside $H$, and $t(H)$ the only one possibly having out-neighbors outside $H$. 

The following concepts are central to our purposes.

Let $H$ be an induced subgraph of $G$. We say $H$ is {\it prime} when
\begin{itemize}
	\item $H$ is isomorphic to some non-trivial statement graph, and 
	\item $H$ is closed.
\end{itemize}

It should be noted that the while and repeat graphs, respectively,  (f) and (g) of Figure~\ref{statementgraphs2}, are not isomorphic in the context of flow reducible graphs. In fact, the cycle edge turns them distinguishable. The sources of such graphs are the entry vertices  of the cycle edge, respectively. Then the sink is an out-neighbor of the source in (f), but not in (g).   

Next,  let $G,H$ be two graphs, $V(G) \cap V(H) = \emptyset$, $H$ source-sink, $v \in V(G)$. 

The {\it expansion} of $v$ into a source-sink graph $H$ (Figure~\ref{expansion}) consists of replacing $v$ by $H$, in $G$, such that
\begin{itemize}
	\item $N_G^-(s(H)) := N_G^-(v)$;
	\item $N_G^+(t(H)) := N_G^+(v)$; and
	\item the remaining adjacencies are unchanged.
\end{itemize}

\begin{figure}
\centering
\includegraphics[scale=0.6]{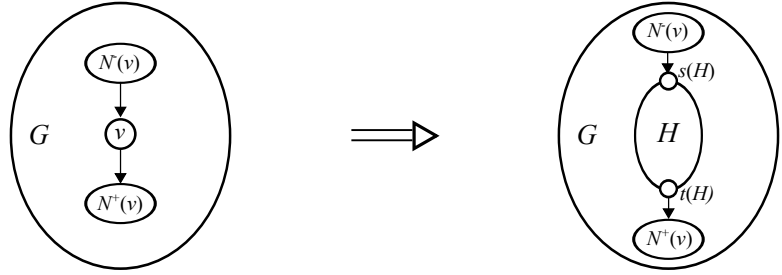}
\caption{Expansion operation}
\label{expansion}
\end{figure}

%The next operation is the opposite to the one above.
Now let $G$ be a graph, and  $H$ a prime subgraph of $G$. The {\it contraction} of $H$ into a single vertex (Figure~\ref{contraction}) is the operation defined by the following steps: 
\begin{enumerate}
	\item Identify (coalesce) the vertices of $H$ into the source $s(H)$ of $H$.
	\item Remove all parallel edges and loops.
\end{enumerate}

\begin{figure}
\centering
\includegraphics[scale=0.6]{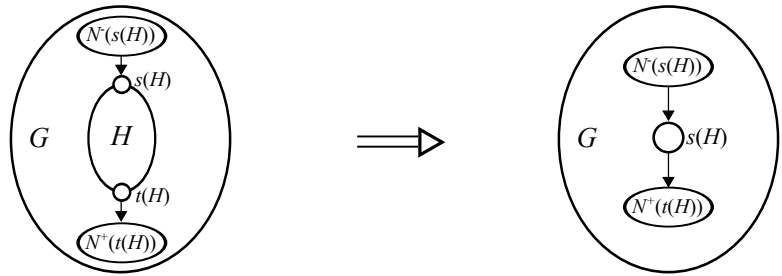}
\caption{Contraction operation}
\label{contraction}
\end{figure}

We finally have the elements to define the class of Dijkstra graphs. The concepts of structured programming and top-down refinement~\cite{D72} lead naturally to the following definition.

A {\it Dijkstra graph (DG)} has vertices labeled $X$ or $R$ recursively defined as:
\begin{enumerate}
	\item A trivial statement graph is a DG.
	\item Any graph obtained from a DG by expanding some $X$-vertex into a non-trivial statement graph is also a DG. Furthermore, after expanding an $X$-labeled vertex $v$ into a statement graph $H$, vertex $s(H)$ is labeled as $R$.  
\end{enumerate}

An example is given in Figure~\ref{DGexample}.

\begin{figure}[h!]
\centering
\includegraphics[scale=0.4]{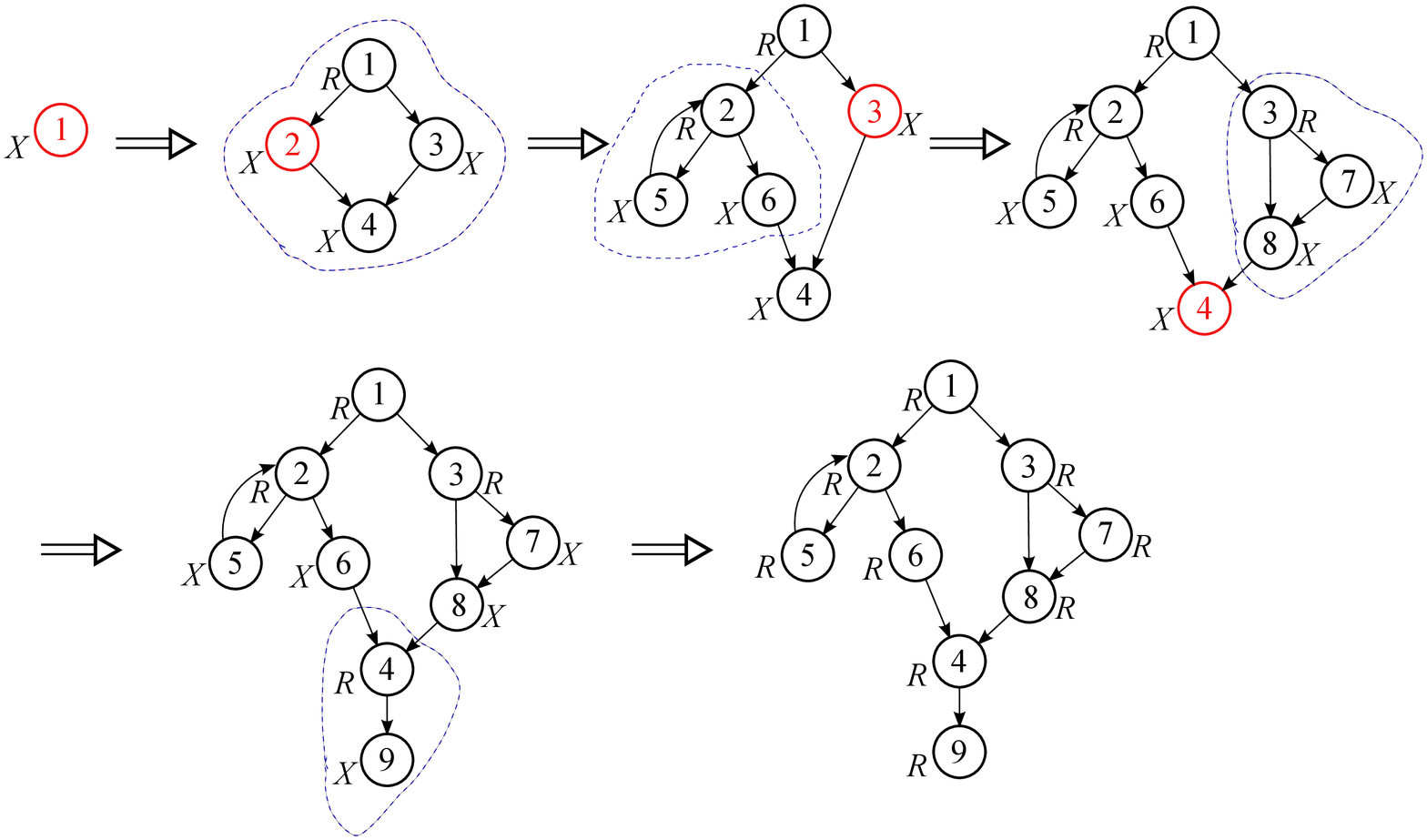}
\caption{Obtaining a Dijkstra graph via vertex expansions}
\label{DGexample}
\end{figure}

The above definition leads directly to a method for constructing  Dijkstra graphs, as follows. Find a sequence of graphs $G_0, \ldots, G_k$, such that 
	\begin{itemize}
		\item $G_0$ is the trivial graph, with the vertex labeled  $X$;
		%\item $G_k \cong G$;
		\item $G_i$ is obtained from $G_{i-1}$, $i \geq 1$, by expanding some X-vertex $v$ of it into a statement graph $H$.
	\end{itemize}

%\begin{proof}
%Suppose $G$ is a DG graph. 
%It follows from its definition that $G$ can be constructed by starting from a trivial graph $G_0$, whose vertex is labelled $X$,  and %iteratively replacing expansion vertices by statement graphs. %
%Let $k$ be the number of expansions performed in the construction of $G$. At iteration $i$, denote by $G_i$ the graph obtained from $G_{i-1}$ by expanding some expansion vertex of $G_{i-1}$ into a statement graph $0 \leq i \leq k$. 
%Then $G_k \cong G$ and the sequence $G_0, \ldots, G_k$ satisfies the conditions of the theorem.

%Conversely, suppose that there exists a sequence of graphs $G_0, \ldots, G_k$ satisfying the conditions (i)-(iii) of the theorem. 
%Conditions (i) and (iii) suffice to show that $G_i$ is a DG graph, for all $0 \leq i \leq k$,  and condition (ii) implies $G$ is a DG %graph. 
%\end{proof}

The above construction does not imply a polynomial-time algorithm for recognizing graphs of the class. In the next section, we describe another characterization which leads to such an algorithm. It is relevant to emphasize that the labels are used merely for constructing the graphs. For the actual recognition process, we are interested in the problem of deciding whether a given {\it unlabeled} flow graph is actually a  Dijkstra graph.

\section{Recognition of Dijkstra Graphs}\label{recognition}

In this section, we describe an algorithm for recognizing Dijkstra graphs. For the recognition process, the hypothesis is that we are given an arbitrary flow graph $G$, with no labels, and the aim is to decide whether or not $G$ is a DG. First, we introduce some notation and describe the propositions which form the basis of the algorithm. 

\subsection{Basic Lemmas}
 
 The following lemma states some basic properties of Dijkstra graphs.

 \begin{lem}\label{basic} If $G$ is a Dijkstra graph, then 
	 \begin{enumerate}[(i)]
 		\item  $G$ contains some prime subgraph;
		 \item  $G$ is a source-sink graph; and
		 \item  $G$ is reducible.
	 \end{enumerate}
\end{lem}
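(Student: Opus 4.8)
The plan is to prove all three parts by induction on the number $k$ of expansion operations in a construction sequence $G_0, \ldots, G_k = G$, where $G_0$ is the trivial graph and each $G_i$ is obtained from $G_{i-1}$ by expanding an $X$-vertex into a non-trivial statement graph. The base case $k=0$ is immediate for (ii) and (iii): the trivial graph is source-sink and acyclic, hence reducible. (Part (i) presumes $G$ non-trivial, i.e. $k \geq 1$, since a trivial graph contains no prime subgraph; I treat it separately below.) For the inductive step I would assume $G_{k-1}$ is source-sink and reducible and analyze the expansion of an $X$-vertex $v$ into a statement graph $H$, using the fact recorded in Section~\ref{graphsSP} that every statement graph is itself source-sink and, being a fixed small graph whose only possible cycle is single-entry, reducible.

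For (ii) I would track where the distinguished source and sink go. If $v \notin \{s(G_{k-1}), t(G_{k-1})\}$, then $s(G_{k-1})$ remains a source of $G_k$: it still reaches every vertex it reached before, and it now reaches the interior of $H$ since $N^-(s(H)) = N^-(v)$ and $s(H)$ reaches all of $V(H)$, while $t(H)$ inherits $N^+(v)$; symmetrically the sink is preserved. If $v = s(G_{k-1})$ the new source is $s(H)$, and if $v = t(G_{k-1})$ the new sink is $t(H)$, the reachability check being routine in each case. For (iii), the key observation is that $H$ is attached to the rest of $G_k$ only through $s(H)$, which alone carries in-edges from outside $V(H)$, and through $t(H)$, which alone carries out-edges to outside $V(H)$. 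Hence any cycle of $G_k$ is contained in $V(H)$, or is disjoint from the interior of $H$, or enters and leaves $V(H)$ only via $s(H)$ and $t(H)$; in every case the vertex separating the source from the cycle continues to do so, so the cycle stays single-entry and $G_k$ remains reducible.

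For (i), rather than induction I would exhibit a concrete prime subgraph by inspecting the last expansion, in which an $X$-vertex $v$ was replaced by a statement graph $H$. Since no vertex of $H$ is expanded afterwards, the induced subgraph on $V(H)$ is exactly $H$, hence isomorphic to a non-trivial statement graph, and it remains to verify that $H$ is closed. The first two closedness conditions are immediate from the expansion operation, which routes external in-edges only into $s(H)$ and external out-edges only out of $t(H)$, so every internal vertex other than $s(H)$ (resp. $t(H)$) has all its in-neighbors (resp. out-neighbors) inside $V(H)$.

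The step I expect to be the main obstacle is the third closedness condition, that every cycle edge $u\,s(H)$ satisfies $u \in N^+(s(H))$. An internal such edge occurs only when $H$ is a while or repeat graph, and there the back-edge into the source does originate at an out-neighbor of $s(H)$, by the structure of those graphs. The delicate case is an external in-neighbor $u \in N^-(v) \setminus V(H)$: since $N^+(s(H)) \subseteq V(H)$, closedness forces that no such external edge be a cycle edge. To establish this I would prove, as an auxiliary invariant maintained throughout the construction, that no $X$-vertex is ever the target of a cycle edge, equivalently that every cycle edge points to an $R$-labelled vertex. This holds because cycle edges into a source are created only by expansions into while or repeat graphs, whose source $s(H)$ is immediately relabelled $R$; combined with reducibility, which by the Hecht--Ullman--Tarjan characterization makes the set of cycle edges invariant and places their targets exactly at the single entries of single-entry cycles, i.e. the loop headers, this guarantees that the expanded $X$-vertex $v$ carried no incoming cycle edge. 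Consequently none of its external in-edges, now the external in-edges of $s(H)$, is a cycle edge, which yields the third condition and completes the identification of $H$ as a prime subgraph.
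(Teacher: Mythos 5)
Your proposal is correct and follows the same overall strategy as the paper's proof: induction on the expansion sequence $G_0,\ldots,G_k$ for parts (ii) and (iii), with the same three-way case analysis of cycles (disjoint from $H$, contained in $V(H)$, or crossing $H$ through $s(H)$ and $t(H)$), and, for part (i), exhibiting the last expanded statement graph as the required prime and checking closedness. Where you genuinely add something is in the treatment of the third closedness condition. The paper's proof of (i) only observes that if $H_i$ contains a cycle then $H_i$ is a while or repeat graph whose internal cycle edge enters $s(H_i)$ from a vertex of $N^+(s(H_i))$; it never addresses the case you single out as the main obstacle, namely that an \emph{external} in-edge of $s(H_i)$, inherited from the expanded vertex $v$, might itself be a cycle edge -- which would violate closedness, since $N^+(s(H_i)) \subseteq V(H_i)$. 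Your auxiliary invariant (no $X$-labelled vertex is ever the target of a cycle edge, because incoming cycle edges are created only by while/repeat expansions, whose sources are immediately relabelled $R$, and reducibility keeps the cycle-edge set invariant) closes exactly this hole, and to my reading is a strengthening of the published argument rather than a deviation from it. Likewise, your observation that (i) as stated fails for the trivial graph -- a DG with no prime subgraph, since primes are non-trivial by definition -- is an edge case the paper silently ignores; restricting (i) to $k \geq 1$ is the right fix. In short: same skeleton, but your version is the more complete proof.
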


\begin{proof}
By definition, there is a sequence of graphs $G_0, \ldots, G_k$, where $G_0$ is trivial, $G_k = G$ and $G_i$ is obtained from $G_{i-1}$ by expanding some $X$-vertex $v_{i-1} \in V(G_{i-1})$ into a statement graph $H_i \subseteq G_i$. Then no vertex $v_i \in V(H_i)$, except $s(H_i)$ has in-neighbors outside $H_i$, and also no vertex $v_i \in V(H_i)$, except $t(H_i)$, has out-neighbors outside $H_i$. 
Furthermore, if $H_i$ contains any cycle then $H_i$ is necessarily a while graph or a repeat graph. The latter implies that such a cycle is $s(H)v$, where $v \in N^+(s(H))$. Therefore $H_i$ is prime in $G_i$ meaning that $(i)$ holds.
To show $(ii)$ and $(iii)$, first observe that any statement graph is single-source and reducible. 
Next, apply induction. 
For $G_0$, there is nothing to prove. 
Assume it holds for $G_i$, $i > 1$. Let $v_{i-1} \in V(G_{i-1})$ be the vertex that expanded into the subgraph $H_i \subseteq G_i$. 
Then the external neighborhoods  of $H_i$ coincide with the neighborhoods  of $v_{i-1}$, respectively.
Consequently, $G_i$ is single-source.
Now, let $C_i$ be any cycle of $G_i$, if existing. If $C_i \cap H_i = \emptyset$ then  $C_i$ is single-entry, since $G_{i-1}$ is reducible. 
Otherwise, if $C_i \subset V(H_i)$ the same is valid, since any statement graph is reducible. 
Finally, if $C_i \not \subset V(H_i)$,  then $v_{i-1}$ is contained in a single-entry cycle $C_{i-1}$ of $G_{i-1}$. Then $C_i$ has been formed by $C_{i-1}$, replacing $v_{i-1}$ by a path contained in $H_i$. 
Since $C_{i-1}$ is single-entry, it follows that $C_i$ must be so.
\end{proof}

Denote by ${\mathcal H}(G)$ the set of non-trivial prime graphs of $G$. 
Let $H,H' \in {\mathcal H}(G)$. Call $H,H'$ {\it independent} when 
\begin{itemize}
	\item $V(H) \cap V(H') = \emptyset$, or
	\item $V(H) \cap V(H') = \{v\}$, where $v = s(H) = t(H')$ or $v = t(H) = s(H')$.
\end{itemize}

The following lemma assures that any pair of distinct, non-trivial prime subgraphs of a graph consists of independent subgraphs.

\begin{lem}\label{independent} Let $H,H' \in {\mathcal H}$. It holds that $H,H'$ are independent.
\end{lem}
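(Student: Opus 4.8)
The plan is to exploit the single consequence of closedness emphasized right after its definition: in a prime subgraph $H$, the source $s(H)$ is the only vertex that may receive edges from outside $V(H)$, and the sink $t(H)$ is the only vertex that may send edges to outside $V(H)$. Call $v \in V(H)$ \emph{interior} to $H$ when $v \notin \{s(H), t(H)\}$; by the first two closedness conditions, every interior vertex has all of its in- and out-neighbors inside $V(H)$. I will also use that every statement graph, and hence every prime subgraph, is source-sink, so $s(H)$ reaches every vertex of $H$ within $H$ and every vertex of $H$ reaches $t(H)$ within $H$. Throughout I assume $H \neq H'$ (otherwise the intersection is all of $V(H)$ and there is nothing permitted), and I must rule out every overlap except the two admissible head-to-tail junctions.

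The heart of the argument is to show that $V(H) \cap V(H')$ contains no vertex interior to $H$ (and, symmetrically, none interior to $H'$). Suppose some $w \in V(H) \cap V(H')$ is interior to $H$. Because $H'$ is source-sink, there is a path from $s(H')$ to $w$ and a path from $w$ to $t(H')$, both inside $H'$. I would trace the first path backwards from $w$: since $w$ is interior to $H$ its in-neighbors lie in $V(H)$, and, walking back along the path, each step lands on a vertex of $V(H')$ that the closedness of $H$ keeps inside $V(H)$ unless we reach $s(H')$. The second closedness condition of $H'$ together with its cycle-edge condition guarantees that the only place where the backward walk can leave $V(H')$ is at $s(H')$, while the only place it can leave $V(H)$ is at $s(H)$. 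Tracing the backward path to $s(H')$ and the forward path to $t(H')$ in this way, and then iterating over every vertex of $H'$ (each reachable from $s(H')$ and reaching $t(H')$ within $H'$), the forward and backward closedness of $H$ confines all of $V(H')$ to $V(H)$; that is, $V(H') \subseteq V(H)$. By symmetry the reverse inclusion follows as well, forcing $H = H'$, a contradiction.

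With no interior vertex shared, $V(H) \cap V(H') \subseteq \{s(H), t(H)\} \cap \{s(H'), t(H')\}$, so only boundary vertices can be shared. It then remains to rule out the configurations $s(H) = s(H')$, $t(H) = t(H')$, and the simultaneous sharing of two vertices. For $s(H) = s(H')$ I would use that, the source not being the sink, closedness forces $N^+(s(H)) \subseteq V(H)$ and $N^+(s(H')) \subseteq V(H')$; since both subgraphs are induced and a statement graph is determined by the out-neighborhood structure issuing from its source, this pins down $H = H'$, again a contradiction. The case $t(H) = t(H')$ is dual, via in-neighborhoods, and the sharing of two boundary vertices would immediately supply one of the forbidden equalities $s(H) = s(H')$ or $t(H) = t(H')$. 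What survives is exactly $V(H) \cap V(H') = \{v\}$ with $v = s(H) = t(H')$ or $v = t(H) = s(H')$, which is the assertion.

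I expect the main obstacle to be the propagation step: making rigorous that the backward and forward walks cannot escape $V(H)$ except through the designated boundary vertices, and in particular correctly accounting for cycle edges entering a source. This is precisely where the third closedness condition and the invariance of cycle edges in reducible flow graphs (the cited characterization) must be invoked, so that the while and repeat graphs cannot furnish an unexpected way for an interior vertex to be shared.
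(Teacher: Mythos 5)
Your overall skeleton (first exclude shared interior vertices, then analyze shared boundary vertices) parallels the paper's proof, but there are two genuine gaps. The decisive one is your final step: you claim that ``the sharing of two boundary vertices would immediately supply one of the forbidden equalities $s(H)=s(H')$ or $t(H)=t(H')$.'' That is false for the crossed configuration in which $s(H)=t(H')$ \emph{and} $t(H)=s(H')$ with these two vertices distinct: it supplies neither forbidden equality, each shared vertex individually looks like an admissible head-to-tail junction, and yet the intersection contains two vertices, so $H,H'$ are not independent. This crossed case is precisely the part of the paper's proof that requires real work, and it is the only place where the third closedness condition (about cycle edges entering a source) is invoked: the crossed sharing closes a directed cycle running through both primes, and reducibility forces exactly one of the two primes, say $H'$, to be a while or repeat graph; the resulting cycle edge $w\,s(H)$ has $w \in V(H')\setminus\{t(H')\}$, hence $w \notin N^+(s(H))$, contradicting the closedness of $H$. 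You mention cycle edges only in connection with your propagation step, not here, so in your proof the crossed case is simply unhandled.

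The second gap is in the propagation argument itself, which you partly acknowledge: the backward (resp.\ forward) walk is confined to $V(H)$ only until it meets $s(H)$ (resp.\ $t(H)$), after which closedness no longer prevents escape, so $V(H')\subseteq V(H)$ does not follow as written. Moreover, even granting $V(H')\subseteq V(H)$, the assertion ``by symmetry the reverse inclusion follows'' is unjustified: the symmetric argument needs a shared vertex \emph{interior to $H'$}, which you never established (indeed $H'$ could be a sequence graph, which has no interior vertices at all). To close this you would either need a finite case analysis showing that no statement graph contains a distinct non-trivial closed statement subgraph (both being induced subgraphs of $G$), or argue directly from the small, explicitly known structure of the statement graphs together with induced-ness and closedness, which is essentially what the paper's terse assertion that the forbidden one-vertex alternatives ``imply $H_1$ or $H_2$ not to be closed'' amounts to.
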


\begin{proof}
If $V(H) \cap V(H') = \emptyset$ the lemma holds. Otherwise, let $v \in V(H) \cap V(H')$. The alternatives $v = s(H_1) = s(H_2)$, $v = t(H_1) = t(H_2)$, $v \neq s(H_1),t(H_1)$ or $v \neq s(H_2),t(H_2)$ do not occur because they imply $H_1$ or $H_2$ not to be closed. Next, let $v_1,v_2 \in V(H_1) \cap V(H_2)$, $v_1 \neq v_2$. In this situation, examine the alternative where $v_1 = s(H_1) = t(H_2)$ and $v = s(H_2) = t(H_1)$. The latter implies that exactly one of $H_1$ or $H_2$, say $H_2$, is a while graph or a repeat graph. Then there is a cycle edge $ws(H_1)$, satisfying $w \in N^-(s(H_1))$ and $w \in V(H_2) \setminus \{t(H_2)\}$. Consequently, $w \not \in N^+(s(H_1))$, contradicting $H_1$ to be closed. The only remaining alternative is $V(H_1) \cap V(H_2) = \{v\}$, with $v = s(H_1) = t(H_2)$ or $v = s(H_2) = t(H_1)$. Then $H_1, H_2$ are indeed independent (see Figure~\ref{f:independent}). 
\end{proof}

\begin{figure}
\centering
\includegraphics[scale=0.6]{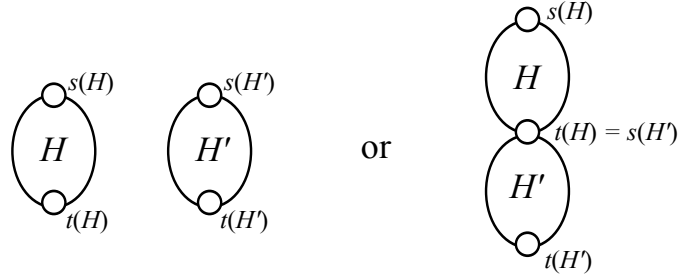}
\caption{Independent primes}
\label{f:independent}
\end{figure}
\vspace{0.5cm}

Next, we introduce a concepts which central for the characterization.

Let $G$ be a graph, ${\mathcal H}(G)$ the set of non-trivial prime subgraphs of $G$, and  $H \in {\mathcal H}(G)$.
Denote by $G \downarrow H $ the graph obtained from $G$ by contracting $H$. For $v \in V(G)$, the {\it image} of $v$ in $G \downarrow H$, denoted $I_{G \downarrow H}(v)$, is
%$I^{G \downarrow H}_G(v) = v$, if $v \not \in H$; otherwise %$s(H)$

%\[rd(G_1\vee G_2)=\left\{\begin{array}{ll}
 %  3, &\mbox{if $\lvert V(G_2)\rvert\ge 2$}\\
 %  \max\{3,comps(G_1)+1\} & \mbox{, otherwise.}
 % \end{array}\right.
%\] %If $|V(G_2)|\geq 2$, then $rd(G_1\vee G_2)=3$. Otherwise, %$rd(G_1\vee G_2)=\max\{3,comps(G_1)+1\}$
\vspace{-0.3cm}

\[I_{G \downarrow H}(v) = \left\{\begin{array}{ll}
    v, &\mbox{if } v \not \in V(H) \\
    s(H), &\mbox{otherwise.}
    \end{array} \right. \]

For $V' \subseteq V(G)$, define  the ({\it subset}) {\it image} of $V'$ in $G \downarrow H$, as $I_{G \downarrow H}(V') = \cup_{v \in V'} I_{G \downarrow H}(v)$. Similarly,  for $H' \subseteq G$, the ({\it subgraph})
{\it image} of $H'$ in $G \downarrow H$, denoted by
$I_{G \downarrow H}(H')$, is the  subgraph induced in $G \downarrow H$
by the subset of vertices $I_{G \downarrow H}(V(H'))$.

The following lemmas  are employed in the ensuing characterization. The first shows that any prime subgraph $H \in {\mathcal G}$ is preserved under contractions of different primes. 
Let $G$ be an arbitrary flow graph, 
$H, H' \in {\mathcal H}(G)$, $H \neq H'$.

\begin{lem}\label{prime-preservation}
$I_{G \downarrow H}(H') \in {\mathcal H}(G \downarrow H)$.
\end{lem}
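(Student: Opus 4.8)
The plan is to show that contracting one prime $H$ does not destroy another prime $H'$: it leaves $H'$ intact (or, in the degenerate case where $H,H'$ share a vertex, leaves a faithful copy), so that its image is still isomorphic to a non-trivial statement graph and is still closed in $G \downarrow H$. The key fact I would lean on is Lemma~\ref{independent}, which guarantees that $H$ and $H'$ are \emph{independent}: either they are vertex-disjoint, or they meet in a single vertex that is simultaneously the source of one and the sink of the other.

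First I would dispose of the disjoint case. If $V(H) \cap V(H') = \emptyset$, then by the image definition every vertex of $H'$ is its own image, since no vertex of $H'$ lies in $V(H)$. Thus $I_{G \downarrow H}(H')$ is (a copy of) $H'$ on the same vertex set, and the contraction of $H$ touches none of its internal edges. I would verify that closedness is inherited: the only way contracting $H$ could alter the in- or out-neighborhoods of vertices of $H'$ is through edges joining $V(H')$ to $V(H)$, and independence combined with the closedness of $H$ forces any such edge to attach at $s(H)$ or $t(H)$ only, which after contraction still points to the single contracted vertex $s(H)$. Hence the three closedness conditions for $H'$ translate verbatim to $I_{G \downarrow H}(H')$, and the statement-graph isomorphism type is unchanged.

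Next I would handle the shared-vertex case, which is where the real work lies. Suppose $V(H) \cap V(H') = \{v\}$ with, say, $v = t(H) = s(H')$ (the symmetric subcase is identical). Under contraction, $s(H)$ absorbs every vertex of $H$, including $v = t(H)$; so the image of the shared vertex $v = s(H')$ becomes $s(H)$. The concern is whether this relabeling of $H'$'s source corrupts $H'$. Since $v$ is the \emph{source} of $H'$, its role in $H'$ is exactly that of an entry point whose in-neighbors may lie outside $H'$; collapsing $H$ into a single vertex merely renames this entry and reroutes the external in-edges of $v$, which is precisely the behavior the closedness definition permits for a source. I would check that the out-edges of $v$ inside $H'$, the internal structure of $H' \setminus \{v\}$, and the sink $t(H')$ are all untouched, so the image is again isomorphic to the same statement graph and remains closed in $G \downarrow H$.

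The main obstacle I anticipate is the bookkeeping around cycle edges in the shared-vertex case: the third closedness condition concerns cycle edges entering the source, and one must confirm that contracting $H$ neither creates a new cycle edge into the (relabeled) source of $H'$ nor removes an existing one in a way that violates the condition $ws(H') \text{ a cycle edge} \Rightarrow w \in N^+(s(H'))$. Establishing that contraction preserves the reducibility structure enough to keep the cycle-edge set of $H'$ consistent — appealing to the invariance of cycle edges from the Hecht--Ullman/Tarjan characterization and to the independence argument already used in Lemma~\ref{independent} — is the delicate step, whereas the disjoint case and the isomorphism-type claim are essentially routine verifications once independence is in hand.
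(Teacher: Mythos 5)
Your proposal is correct and follows essentially the same route as the paper's proof: invoke Lemma~\ref{independent} to reduce to the disjoint and single-shared-vertex cases, then check that contraction leaves all neighborhoods of the image of $H'$ unchanged except at the shared vertex (where the in- or out-neighborhood is merely rerouted to the contracted vertex), and that no new cycles are introduced, so the image remains a non-trivial, closed statement graph. The only cosmetic difference is that the paper spells out both shared-vertex subcases separately (the case $v = s(H) = t(H')$ needing the extra remark about deleting self-loops), where you appeal to symmetry.
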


\begin{proof}
Let $G$ be a graph,  
$H, H' \in {\mathcal H}(G)$, $H \neq H'$. By Lemma~\ref{independent}, $H,H'$ are independent. If $H,H'$ are disjoint the contraction of $H$ does not affect $H'$, and the lemma holds. Otherwise, by the independence condition, it follows that $V(H) \cap V(H') = \{v\}$, where $v = s(H) = t(H')$ or $v = s(H') = t(H)$. Examine the first of these alternatives. By contracting $H$, all neighborhoods of the vertices of $I_{G \downarrow H}(H')$ remain unchanged, except that of $I_{G \downarrow H}(s(H'))$, since its in-neighborhood becomes equal to $N_G^-(s(H))$. On the other hand, the contraction of $H$ into $v$ cannot introduce new cycles in $H'$.
Consequently, $H'$ preserves in $G \downarrow H$ its property of being a non-trivial and closed statement graph, moreover, prime. Finally, suppose $v = s(H)=t(H')$. Again, the neighborhoods of the vertices of $I_{G \downarrow
H}(H')$ are preserved, except possibly the out-neighborhoods of the vertices of $I_{G \downarrow H}(t(H'))$, which become $N_G^+(t(H))$, after possibly removing self-loops. Consequently,   $I_{G \downarrow H}(H')  \in {\mathcal H}(G \downarrow H)$. 
\end{proof}

Next we prove  prove a commutative law for the order of contractions. 

\begin{lem}\label{Commutative-law}
If $H, H' \in {\mathcal H}(G)$, then
%\vspace{-0.3cm}
$(G \downarrow H) \downarrow (I_{G \downarrow H}(H')) \cong
  (G \downarrow H') \downarrow (I_{G \downarrow H'}(H)).$
%  \]
\end{lem}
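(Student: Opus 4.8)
The plan is to prove the commutative law by a careful case analysis based on how the two primes $H$ and $H'$ interact, guided entirely by Lemma~\ref{independent} and Lemma~\ref{prime-preservation}. By Lemma~\ref{independent}, $H$ and $H'$ are independent, so there are exactly two structural situations to handle: either $V(H) \cap V(H') = \emptyset$, or the two primes share a single vertex that is the source of one and the sink of the other. In each case, Lemma~\ref{prime-preservation} guarantees that $I_{G \downarrow H}(H')$ is a genuine prime of $G \downarrow H$ (and symmetrically), so both sides of the claimed isomorphism are well-defined graphs obtained by contracting two primes in some order. I would therefore first establish that both iterated contractions coalesce exactly the same set of original vertices, and then exhibit the natural bijection between the resulting vertex sets.

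First I would treat the disjoint case. Here contracting $H$ does not touch any vertex of $H'$ and vice versa, so the two operations act on independent vertex sets; the composite map sends every vertex of $V(H)$ to $s(H)$, every vertex of $V(H')$ to $s(H')$, and fixes all remaining vertices, regardless of the order. I would define $f$ to be the identity on this common underlying vertex set and check that an edge survives on one side precisely when it survives on the other, using that parallel-edge and loop removal depends only on which original vertices are identified, not on the sequence of identifications. The shared-vertex case is the substantive one. Suppose $V(H) \cap V(H') = \{v\}$ with, say, $v = t(H) = s(H')$ (the other subcase is symmetric). Contracting $H$ coalesces $V(H)$ into $s(H)$, and since $v = t(H)$ is absorbed into $s(H)$, the image $I_{G \downarrow H}(s(H')) = s(H)$; so after also contracting $I_{G \downarrow H}(H')$, the whole of $V(H) \cup V(H')$ collapses to the single vertex $s(H)$. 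Performing the contractions in the reverse order collapses the same set $V(H) \cup V(H')$ to $s(H')$. Thus both sides have the same vertex set apart from the name of the one surviving ``merged'' vertex, and I would let $f$ be the identity everywhere except that it maps $s(H)$ to $s(H')$.

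The key verification is then that $f$ is edge-preserving. Because every vertex outside $V(H) \cup V(H')$ is untouched by both contraction orders, any edge not incident to the merged super-vertex is trivially preserved. For edges incident to the super-vertex, I would trace through the neighborhood-rewriting rules of the contraction operation: the in-neighbors of the super-vertex on the left are determined by $N_G^-(s(H))$, while on the right the merged vertex $s(H')$ inherits, via the second contraction, exactly the in-neighbors coming from $s(H)$, and dually for out-neighbors via $t(H')$ and $N_G^+(t(H'))$. I expect the main obstacle to be bookkeeping the neighborhoods cleanly at the shared vertex $v$: one must confirm that the adjacencies ``bridging'' across $v$ end up identical on both sides after the loop- and parallel-edge removal, and in particular that no spurious loop or missing edge is introduced by the difference in which endpoint name ($s(H)$ versus $s(H')$) is retained. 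Once this single incidence check is carried out, the identity-up-to-renaming map $f$ is a bijection preserving edges in both directions, establishing the isomorphism.
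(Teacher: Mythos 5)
Your proposal follows essentially the same route as the paper: invoke Lemma~\ref{independent} to reduce to two cases (disjoint primes, or primes sharing a single source/sink vertex), then verify directly that both contraction orders identify the same set of original vertices and produce the same neighborhoods at the merged vertex. The disjoint case and the overall neighborhood bookkeeping ($N_G^-(s(H))$ for in-edges, $N_G^+(t(H'))$ for out-edges when $v = t(H) = s(H')$) match the paper's argument. There is, however, one bookkeeping error in your shared-vertex case: you claim that reversing the order of contractions collapses $V(H) \cup V(H')$ to $s(H')$, so that a renaming map $s(H) \mapsto s(H')$ is needed. In fact both orders collapse everything to $s(H)$. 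The reason is that contraction always retains the \emph{source} of the contracted prime, and the shared vertex $v = t(H) = s(H')$ is the source of the downstream prime $H'$: contracting $H'$ first keeps $v$ as the surviving name, but $v$ is then the sink of the image of $H$, so the second contraction absorbs it into $s(H)$. Consequently your map $f$, as stated, sends $s(H)$ to a vertex that does not exist in the target graph. This is not a fatal flaw --- correcting it makes the argument simpler, since the two graphs then have literally identical vertex sets and the required map is the identity, which is exactly what the paper concludes ($A = B$); but the verification step you describe as the crux would have flagged this, and the write-up should be repaired accordingly.
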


\begin{proof}
Let $A \cong (G \downarrow H) \downarrow (I_{G \downarrow H}(H'))$ and $B \cong (G \downarrow H') \downarrow (I_{G \downarrow H'}(H))$. 
By Lemma~\ref{independent}, $H,H'$ are independent. 
First, suppose $H,H'$ are disjoint. 
Then $I_{G \downarrow H}(H') = H'$ and $I_{G \downarrow H'}(H) = H$. 
It follows that, in both graphs $A$ and $B$, the subgraphs $H$ and $H'$ are respectively replaced by a pair of non-adjacent vertices, whose in-neighborhoods are $N_G^-(s(H))$ and $N_G^-(s(H'))$, and out-neighborhoods $N_G^+(t(H))$ and $N_G^+(t(H'))$, respectively. Then $A = B$. 
In the second alternatives, suppose $H,H'$ are not disjoint. 
Then $V(H) \cap V(H') = \{v\}$, where $v = s(H) = t(H')$, or $v = t(H) = s(H')$. In both cases, and in both graphs $A$ and $B$, the subgraphs $H$ and $H'$ are contracted into a common vertex $w$. 
When $v = s(H) = t(H')$, it follows $N_G^-(A) = N_G^-(s(H')) = N_B^-(v)$ and $N_A^+(v) = N_G^+(t(H)) = N_B^+(v)$. Finally, when $v = t(H) = s(H')$, we have $N_A^-(v) = N_G^-(s(H)) = N_B^-(v)$, while $N_A^+(v) = N_G^+(t(H')) = N_B^+(v)$. Consequently, $A = B$ in any situation. $\square$
\end{proof}

\subsection{Contractile Sequences}

%In the sequel we describe some additional notation.

%Let $H,H'$ be prime subgraphs of some graph $G$.  Say that $H,H'$ are {\it strongly isomorphic}, denoted by $H \equiv H'$,  when both $H %\cong H'$ and   $(G \downarrow H) \cong (G \downarrow H')$. 

A sequence of graphs $G_0, \ldots, G_k$ is a {\it contractile
sequence} for a graph $G$, when 
\begin{itemize}
	\item $G \cong G_0$, and
	\item $G_{i+1} \cong (G_i \downarrow H_i)$, for some $H_i \in {\mathcal H}(G_i)$, $i < k$. Call $H_i$ the {\it contracting prime} of $G_i$.
\end{itemize}

We say $G_0, \ldots, G_k$ is {\it maximal} when ${\mathcal H}(G_k) = \emptyset$.
In particular, if $G_k$ is the trivial graph then $G_0, \ldots,
G_k$ is maximal.\\

Let $G_0, \ldots, G_k$, be a  contractile sequence of $G$, and
$H_j$ the contracting prime of $G_j$. That is,  
$G_{j+1} \cong (G_j \downarrow H_j$), $0 \leq j < k$.   
For $H_j' \subseteq G_j$ and $q \geq j$, the {\it iterated image} of $H_j'$ in $G_q$ is recursively defined as 
\vspace{-0.3cm}

\[I_{G_q}(H'_j) = \left\{\begin{array}{ll}
  H'_j,  &\mbox{if } q = j \\
   I_{G_q}(I_{G_{j+1}}(H'_j)), &\mbox{otherwise.}
    \end{array} \right. \]

Finally, we describe the characterization in which the recognition algorithm for Dijkstra graphs is based.

\begin{thm}\label{Main-characterization}
Let $G$ be an arbitrary flow graph, with $G_0, \ldots, G_k$ and $G'_0,
\ldots, G'_{k'}$ two contractile sequences of $G$. Then $G_k \cong G'_{k'}$. Furthermore, $k = k'$.
\end{thm}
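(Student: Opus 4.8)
The plan is to prove the statement by induction on $n = |V(G)|$, reading both sequences as \emph{maximal} contractile sequences (that is, $\mathcal{H}(G_k) = \mathcal{H}(G'_{k'}) = \emptyset$), which is the intended hypothesis since otherwise a sequence could stop early and the lengths would trivially differ. The engine of the argument is the commutative law of Lemma~\ref{Commutative-law}, which supplies a strong ``diamond'' property: any two distinct first contractions reconverge after exactly one further contraction on each branch, with Lemma~\ref{prime-preservation} certifying that the crossed primes are still available to be contracted.

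For the base case, if $\mathcal{H}(G) = \emptyset$ then both sequences have length $0$ and $G_k \cong G \cong G'_{k'}$. For the inductive step, assume the claim for all flow graphs on fewer than $n$ vertices, and let $H_0$ and $H'_0$ be the contracting primes of $G_0$ and $G'_0$, respectively. Since a non-trivial prime has at least two vertices, each contraction strictly decreases the vertex count, so $G_1 = G \downarrow H_0$ and $G'_1 = G \downarrow H'_0$ both have fewer than $n$ vertices, which is what lets the induction hypothesis apply to them.

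If $H_0 = H'_0$, then $G_1 \cong G'_1$, and both $G_1, \ldots, G_k$ and $G'_1, \ldots, G'_{k'}$ are maximal contractile sequences of this common graph; the induction hypothesis immediately yields $k - 1 = k' - 1$ and $G_k \cong G'_{k'}$. If $H_0 \neq H'_0$, Lemma~\ref{prime-preservation} guarantees $I_{G \downarrow H_0}(H'_0) \in \mathcal{H}(G_1)$ and $I_{G \downarrow H'_0}(H_0) \in \mathcal{H}(G'_1)$, so by Lemma~\ref{Commutative-law} there is a graph $W$ with $(G_1 \downarrow I_{G \downarrow H_0}(H'_0)) \cong W \cong (G'_1 \downarrow I_{G \downarrow H'_0}(H_0))$. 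Fix any maximal contractile sequence $W = W_0, \ldots, W_\ell$. Placing $G_1$ in front of it, with first contracting prime $I_{G \downarrow H_0}(H'_0)$, produces a maximal contractile sequence of $G_1$ of length $1 + \ell$; comparing it against the maximal sequence $G_1, \ldots, G_k$ through the induction hypothesis gives $k - 1 = 1 + \ell$ and $G_k \cong W_\ell$. The symmetric construction starting from $G'_1$ yields $k' - 1 = 1 + \ell$ and $G'_{k'} \cong W_\ell$. Combining these, $k = k'$ and $G_k \cong W_\ell \cong G'_{k'}$.

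I expect the main obstacle to be expository rather than conceptual: verifying that the sequences assembled in the diamond step are \emph{genuinely} maximal contractile sequences (which rests on Lemma~\ref{prime-preservation} keeping the crossed images valid primes, and on contractile sequences being closed under isomorphism so that $W$ is well defined up to $\cong$), and making explicit the maximality assumption that renders the length equality $k = k'$ meaningful. Once the diamond is closed by Lemma~\ref{Commutative-law}, the induction on the vertex count delivers both confluence and equal length in a single stroke.
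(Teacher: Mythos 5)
Your proof is correct, and it takes a genuinely different route from the paper's, even though both arguments rest on the same two ingredients: Lemma~\ref{prime-preservation} (a prime survives the contraction of a different prime) and Lemma~\ref{Commutative-law} (contractions of distinct primes commute). You package these as a diamond property inside an induction on $|V(G)|$, in the style of Newman's lemma: two distinct first contractions reconverge in one step to a common graph $W$, a maximal sequence of $W$ is grafted onto both branches, and the induction hypothesis applied to $G_1$ and $G'_1$ (each with strictly fewer vertices, since every non-trivial prime has at least two vertices) delivers confluence and equal length simultaneously. The paper instead performs surgery on the sequences themselves: it takes the first index $i$ at which the two sequences diverge, uses maximality to locate the later index $p$ at which the iterated image of $H_i$ is contracted in the second sequence, and then repeatedly transposes adjacent independent contractions (justified by the same lemmas) to bubble that contraction back to position $i$, making the sequences agree one step further; this outer operation is itself iterated until the sequences coincide. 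Your organization is arguably cleaner and more self-contained: the paper's ``apply the above operation iteratively'' hides a termination argument that your vertex-count induction makes explicit, and you also state openly the maximality hypothesis $\mathcal{H}(G_k)=\mathcal{H}(G'_{k'})=\emptyset$, without which the statement is false and which the paper invokes only implicitly when it says that $G'_{k'}$ has no prime subgraph. What the paper's exchange argument buys in return is slightly more information: it shows that the two maximal sequences differ only by a reordering of (iterated images of) the same contracting primes, rather than merely that they terminate at isomorphic graphs after the same number of steps.
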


\begin{proof}
Let $G_0, \ldots, G_k$ and $G'_0, \ldots, G'_{k'}$ be two contractile sequences, denoted respectively by $S$ and $S'$ of a graph $G$. Let $H_j$ and $H'_{j}$ be the contracting primes of $G_j$ and $G'_j$, respectively. That is, $G_{j+1} \cong (G_j \downarrow H_j)$ and $G'_{j+1} \cong (G'_j \downarrow H'_j)$, $j < k$ and $j < k'$. Without loss of generality, assume $k \leq k'$. Let $i$ be the least index, such that $G_j \cong G'_j$, $j \leq i$. Such an index exists since $G \cong G_0 \cong G'_0$. If $i = k$ then $G_k \cong G'_{k'}$, implying $k = k'$ and the theorem holds. 
Otherwise, $i < k$, $G_i \cong G'_i$ and $G_i \not \cong G'_i$. Since $G_i \cong G'_i$, it follows $H_i \in {\mathcal H} (G'_i)$. By Lemma~\ref{prime-preservation}, the iterated image $H_{i_{q}}$, of $H_i$ in $G'_q$ is preserved as a prime subgraph for all $G'_q$, as long as it does not become the contracting prime of $G'_{q-1}$. Since $G'_{k'}$ has no prime subgraph, it follows there exists some index $p$, $i < p < k'$, such that $G'_{p+1} \cong (G_p \downarrow H_{i_{p}})$, where $H_{i_{p}}$ represents the iterated image of $H_i$ in $G'_p$. Let $H_{i_{p-1}}$ be the iterated image of $H_i$ in $G'_{p-1}$. Clearly, $H'_{p-1}, H_{i_{p-1}} \in {\mathcal H}(G'_{p-1})$, and by Lemma~\ref{independent}, $H'_{p-1}$ and $H_{i_{p-1}}$ are independent in $G'_{p-1}$. Since $((G'_{p-1} \downarrow H'_{p-1}) \downarrow H_{i_{p}}) \cong G'_{p+1}$, by Lemma~\ref{prime-preservation}, it follows that $((G'_{p-1} \downarrow H_{i_{p-1}}) \downarrow H''_{p-1}) \cong G'_{p+1}$, where $H''_{p-1}$ represents the image of $H'_{p-1}$ in $G'_{p-1} \downarrow H_{i_{p-1}}$. Consequently, we have exchanged the positions in $S'$ of two contracting primes, respectively at indices $p-1$ and $p$, while preserving all graphs $G'_q$, for $q < p-1$ and $q > p$. In  particular, preserving the graph $G'_{p+1}$ and all graphs lying after $G'_{p+1}$ in $S'$, together with their corresponding contracting primes.

Finally, apply the above operation iteratively, until eventually the iterated image of $H_i$ becomes the contracting prime of $G'_i$. In the latter situation, the two sequences  coincide up to
index $i+1$, while preserving the original graphs $G_k$ and $G'_{k'}$. Again, applying iteratively such an argument, we eventually obtain that the two sequences turned coincident, preserving the original graphs $G_k$ and $G'_{k'}$. Consequently, $G_k \cong G'_{k'}$  and $k = k'$.
\end{proof}

\subsection{The Recognition Algorithm}\label{algorithm}

We start with a bound for the number $m$ of edges of Dijkstra graphs.

\begin{lem}\label{bound}  Let $G$ be a DG graph. Then $m \leq 2n -2$.
\end{lem}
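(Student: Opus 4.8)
The plan is to prove the bound $m \leq 2n-2$ by induction on the construction of the Dijkstra graph, following the recursive definition that builds a DG from the trivial graph via a sequence of expansions of $X$-vertices into non-trivial statement graphs. The base case is the trivial graph, which has $n=1$ and $m=0$, so $0 \leq 2\cdot 1 - 2 = 0$ holds with equality. The key observation is that each statement graph contributes a fixed, known surplus of edges relative to its vertices, and expansion replaces a single vertex by such a graph while preserving the external edge count.

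For the inductive step, I would suppose $G$ is obtained from a DG $G'$ satisfying $m' \leq 2n'-2$ by expanding an $X$-vertex $v$ into a non-trivial statement graph $H$. The expansion operation (as defined in the excerpt) deletes $v$, inserts the $n_H := |V(H)|$ vertices of $H$ together with $H$'s internal edges $m_H := |E(H)|$, and reroutes the edges formerly incident to $v$ so that in-edges of $v$ go to $s(H)$ and out-edges of $v$ leave from $t(H)$; crucially, the number of such external edges is unchanged. Hence $n = n' - 1 + n_H$ and $m = m' + m_H$. To close the induction I need $m = m' + m_H \leq 2n' - 2 + m_H \leq 2(n'-1+n_H)-2 = 2n-2$, which reduces to verifying $m_H \leq 2n_H - 2$ for every non-trivial statement graph $H$.

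The heart of the argument is therefore a finite case check over the statement graph types (b)--(g). I would compute $(n_H, m_H)$ for each: the sequence graph has $n_H=2, m_H=1$; the if graph has $n_H=3, m_H=3$; the if-then-else graph has $n_H=4, m_H=4$; the $p$-case graph has $n_H = p+2, m_H = 2p$ for $p\geq 3$ (giving $2n_H-2 = 2p+2 \geq 2p$); and the while and repeat graphs each have $n_H=2, m_H=2$. In every case the inequality $m_H \leq 2n_H-2$ holds (the tightest being while/repeat, where $2 = 2\cdot 2 - 2$, and the $p$-case graphs, which satisfy it with slack). This reads directly off Figures~\ref{statementgraphs1} and~\ref{statementgraphs2}.

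The main obstacle is not any hard estimate but rather pinning down the precise vertex and edge counts of the $p$-case graph and confirming the bookkeeping of the expansion operation, namely that rerouting $v$'s incident edges onto $s(H)$ and $t(H)$ neither creates nor destroys edges (after the parallel-edge and loop cleanup is accounted for, if any). Once the per-statement-graph inequality $m_H \leq 2n_H - 2$ is established and the additive relations $n = n' - 1 + n_H$, $m = m' + m_H$ are confirmed, the induction closes immediately. I would present the statement-graph table compactly and note that equality $m = 2n-2$ is attained precisely when every expansion uses a while, repeat, if, or if-then-else graph, explaining why the bound is tight.
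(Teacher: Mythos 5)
Your argument follows the same route as the paper's proof: induction on the expansion sequence $G_0,\ldots,G_k$, the bookkeeping $n = n'-1+n_H$ and $m = m'+m_H$ (valid because expansion reroutes the external edges of $v$ onto $s(H)$ and $t(H)$ without creating or destroying any, and no parallel-edge or loop cleanup occurs), and a finite case check reducing everything to $m_H \le 2n_H - 2$ for each non-trivial statement graph. That skeleton is sound and matches the paper. However, your counts for the loop statements are wrong: in this paper the while and repeat graphs ((f) and (g) of Figure~\ref{statementgraphs2}) each have \emph{three} vertices and \emph{three} edges --- a source $s$, a loop vertex $w$ joined to $s$ by the cycle edge, and a sink --- which is why the paper's proof uses $n = n'+2$, $m = m'+3$ for these cases, exactly as for the if graph. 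Since $3 \le 2\cdot 3 - 2$, the inequality still holds with these corrected values, so your induction closes regardless; this slip is harmless to the lemma. Not harmless is your closing tightness remark: with the correct counts, every non-trivial expansion strictly increases the slack $2n-2-m$ (by $1$ for sequence, if, while and repeat graphs, and by $2$ for if-then-else and $p$-case graphs), so equality $m = 2n-2$ holds only for the trivial graph, and no non-trivial DG attains the bound. The claim that while, repeat, if, and if-then-else expansions preserve equality is false --- indeed it fails even under your own counts for the if and if-then-else cases, where $2n_H - 2 - m_H$ equals $1$ and $2$ respectively. Delete or correct that remark.
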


\begin{proof}:
If $G$ is a DG graph there is a sequence of graphs $G_0, \ldots G_k$, where $G_0$ is the trivial graph, $G_k \cong G$ and $G_i$ is obtained from $G_{i-1}$ by expanding an $X$-vertex of $G_{i-1}$ into a statement graph. Apply induction on the number of expansions employed in the construction of $G$. If $k = 0$ then $G$ is a trivial graph, which satisfies the lemma. For $k \geq 0$, Suppose the lemma true for any graph $G' \cong G_i$, $i < k$.  In particular, let $G_i \cong G_{k-1}$. Let $n'$ and  $m'$ be the number of vertices and edges of $G'$, respectively. Then $m' \leq 2n' -2$. We know that $G_k$ has been obtained by expanding a vertex of $G_{k-1}$ into a statement graph $H$. Discuss the alternatives for $H$. If $H$ is the trivial graph then $n = n'$ and $m = m'$.  If $H$ is a sequence graph then $n = n' + 1$ and $m = m' + 1$. If $H$ is an if graph, a while graph or repeat graph then $n = n' + 2$ and $m = m' + 3$. If $H$ is an if then else graph or a $p$-case graph then $n = n' + p + 1$ and $m = m' + 2p$, where $p$ is the outdegree of the source of $H$. In any of these alternatives, a simple calculation implies $m \leq 2n-2$.
\end{proof}

We can describe an algorithm for recognizing Dijkstra graphs based on Theorem~\ref{Main-characterization}. We recall that the input is a unlabeled flow graph with no labels. Furthermore, for a while, assume that $G$ is reducible, otherwise by Lemma~\ref{basic} it is surely not a Dijkstra graph.

Let $G$ be a flow reducible graph. To apply Theorem~\ref{Main-characterization}, we construct a contractile sequence $G_0, \ldots, G_k$ of $G$. That is, find iteratively a non-trivial prime subgraph $H_i$ of the $G_i$ and  contract it, until either the graph becomes trivial or otherwise no such subgraph exists. In the first case the graph is a DG, while in the second it is not. Recall from Lemma~\ref{prime-preservation} that whenever $G_i$  contains another prime $H_j \neq H_i$ then the iterated image of $H_j$ is preserved, as long as it does not become the contracting prime in some later iteration. On the other hand, the contraction $G_i \downarrow H_i$ may generate a new prime $H'_i$, as shown in Figure~\ref{fig:generator}. However, the generation of new primes obeys a rule, described by the lemma below.

\begin{figure}[h]  
    \centering \includegraphics[height=5cm]{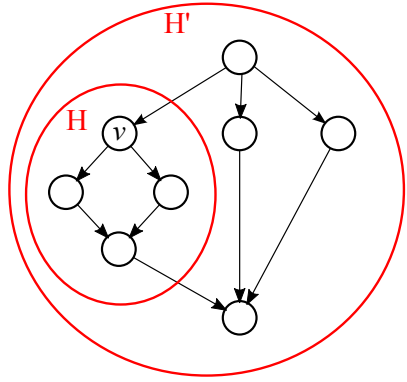}
     \label{fig:generator}
		  \caption{Generating a new prime H'}
\end{figure}   

\begin{lem}\label{lem:generator-generated}
Let $G$ be  reducible graph, $H \in {\mathcal H}(G)$,  $H' \in {\mathcal H}(G \downarrow H)
\setminus {\mathcal H}(G)$. Then $s(H)$ is a proper descendant of $s(H')$ in $G \downarrow H$.
\end{lem}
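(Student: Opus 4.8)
The plan is to read the hypothesis $H'\in{\mathcal H}(G\downarrow H)\setminus{\mathcal H}(G)$ as saying that $H'$ is \emph{genuinely new}: by Lemma~\ref{prime-preservation} the image under $I_{G\downarrow H}$ of every prime of $G$ distinct from $H$ is again a prime of $G\downarrow H$, so the primes inherited from $G$ are exactly these images, and a new prime is one that is not the image of any prime of $G$. Write $a=s(H)$ and $b=t(H)$, and recall that in $G\downarrow H$ the whole of $H$ collapses to the single vertex $a$, with $N^-_{G\downarrow H}(a)=N^-_G(a)$ and $N^+_{G\downarrow H}(a)=N^+_G(b)$ (after discarding loops and parallel edges). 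Since $H$ is non-trivial we have $a\neq b$, and because $H$ is closed we will repeatedly use that the only vertex of $H$ with in-neighbours outside $H$ is $a$, while the only one with out-neighbours outside $H$ is $b$. The argument proceeds in three steps: (i) the contracted vertex $a$ lies in $V(H')$; (ii) $a\neq s(H')$; and (iii) from these two facts the conclusion follows.

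For (i), suppose $a\notin V(H')$, so $V(H')\subseteq V(G)\setminus V(H)$. Any adjacency between $V(H')$ and $V(H)$ can only use $a$ or $b$, hence appears in $G\downarrow H$ as an adjacency with the contracted vertex $a$. Closedness of $H'$ in $G\downarrow H$ together with $a\notin V(H')$ then forces every internal vertex of $H'$ to have no neighbour in $H$ at all, so the neighbourhoods of the internal vertices, and therefore the induced structure of $H'$, are identical in $G$ and in $G\downarrow H$. Thus $H'$ is already a closed prime of $G$ equal to its own image, contradicting newness; hence $a\in V(H')$.

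Step (ii) is the heart of the argument, and I would split on whether $H'$ is acyclic or cyclic. If $H'$ is acyclic and $a=s(H')$, I would form the subgraph $H''$ of $G$ with $V(H'')=(V(H')\setminus\{a\})\cup\{b\}$. Using $N^+_{G\downarrow H}(a)=N^+_G(b)$, the fact that the sink $b$ has all its in-neighbours inside $H$ (first closedness bullet) and no out-neighbour inside $H$, and the fact that the source of an acyclic statement graph has no in-edge within the graph, the map sending $a\mapsto b$ and fixing everything else is an isomorphism realising $H''$ as the same statement graph; one then checks that $H''$ is closed in $G$ and that $I_{G\downarrow H}(H'')=H'$, so $H'$ is not new, a contradiction. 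If instead $H'$ is a while or repeat graph and $a=s(H')$, its cycle edge is some $v\to a$ with $v\in V(H')\setminus\{a\}$, whence $v\notin V(H)$ and $v\to a$ is an edge of $G$. Since $H'$ is closed, its only entry is $a=s(H')$, so $a$ dominates $v$ in $G\downarrow H$; and since $H$ is closed, every path from $s(G)$ to $v$ in $G$ must enter $H$ through $s(H)=a$, so $a$ dominates $v$ in $G$ as well. As $H$ furnishes a path from $a$ to $b$, and the loop body of $H'$ a path from $b$'s exit to $v$, there is a path from $a$ to $v$ in $G$; reducibility of $G$ then makes the edge $v\to a=v\,s(H)$ a cycle edge of $G$, and the third clause of the definition of closed forces $v\in N^+(s(H))\subseteq V(H)$, contradicting $v\notin V(H)$. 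Hence $a\neq s(H')$ in every case.

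Finally, for (iii), since $a\in V(H')$ and $a\neq s(H')$, and $s(H')$, being the source of the statement graph $H'$, reaches every vertex of $H'$ along its forward edges, $s(H')$ reaches $a$; therefore $s(H)=a$ is a proper descendant of $s(H')$ in $G\downarrow H$, as claimed. I expect the main obstacle to be the cyclic sub-case of step (ii): it requires tracking carefully how cycles, dominators, and hence cycle edges transform under the contraction, and it is precisely there that the reducibility of $G$ and the third (cycle-edge) clause of closedness are indispensable. The same cycle-edge bookkeeping also underlies verifying closedness of the auxiliary subgraphs in step (i) and in the acyclic part of step (ii).
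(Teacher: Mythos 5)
The first thing to say is that the paper contains \emph{no proof} of Lemma~\ref{lem:generator-generated}: it is asserted bare, between Figure~\ref{fig:generator} and the definition of bottom-up sequences. So there is no paper proof to compare you against; you are supplying an argument where the authors supplied none. Your three-step skeleton --- (i) the contracted vertex $a=s(H)$ lies in $V(H')$, (ii) $a\neq s(H')$, (iii) hence $s(H')$ reaches $a$ inside $H'$ by forward edges, making $a$ a proper descendant of $s(H')$ --- is in my judgment a correct way to prove it. Two points deserve emphasis. First, your reinterpretation of the hypothesis is not a convenience but a necessity: read literally, $H'\in{\mathcal H}(G\downarrow H)\setminus{\mathcal H}(G)$ makes the lemma false. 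Take $G$ the path $a\to b\to c$ and $H$ the prime on $\{a,b\}$; then $G\downarrow H$ is the single edge $a\to c$, which is a prime of $G\downarrow H$ and not a subgraph of $G$ (the edge $ac$ is absent from $G$), yet its source equals $s(H)$, so the conclusion fails. The reading under which the lemma is true --- and the one matching its role in the correctness of Algorithm~\ref{alg:recognition}, where ``new'' primes are contrasted with iterated images of old ones preserved by Lemma~\ref{prime-preservation} --- is exactly yours: $H'$ is not the image $I_{G\downarrow H}(H'')$ of any $H''\in{\mathcal H}(G)$. Second, your cyclic sub-case of step (ii) is where the hypotheses earn their keep, and the argument there is sound: the cycle edge $v\to a$ of $H'$ lifts to an edge $v\,s(H)$ of $G$; domination of $v$ by $a$ transfers from $G\downarrow H$ to $G$ because $H$ can only be entered through $a$; the path from $a$ through $H$ to $b=t(H)$ followed by the edge $b\to v$ shows $a$ reaches $v$, so $v\,s(H)$ is a genuine cycle edge of $G$; and the third closedness clause then forces $v\in N^+(s(H))\subseteq V(H)$, a contradiction.

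What keeps this from being a complete proof is precisely the item you flag yourself: every step leans on a correspondence between cycle edges (equivalently, by the Hecht--Ullman characterization, dominator relations) of $G$ and of $G\downarrow H$, and on $G\downarrow H$ being reducible at all. Without the latter, ``proper descendant in $G\downarrow H$'' is not even well defined in the paper's sense, step (iii) cannot conclude that the forward edges of $H'$ avoid the cycle edges of $G\downarrow H$, and the third closedness clause cannot be verified for the auxiliary subgraphs $H''$ in steps (i) and (ii). Concretely, you need three auxiliary facts: contraction of a closed prime preserves reducibility; dominator relations among surviving vertices are preserved under the contraction (paths project and lift because $H$ is entered only at $s(H)$ and left only at $t(H)$); and in a reducible flow graph an edge is a cycle edge if and only if its head dominates its tail. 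All three are true and provable by the path-lifting arguments you sketch, but each is comparable in effort to the lemma itself, so they should be stated and proved as explicit lemmas rather than waved at. In fairness, the paper silently assumes all of this machinery too, so your proposal, even with this deferred bookkeeping, already goes beyond what the authors wrote down.
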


The above lemma suggests us to consider special contractile sequences, as below. 

Let $G$ be a reducible graph,  $G_0, \ldots, G_k$ a  contractile
sequence ${\mathcal C}$ of $G$,  $H_i$ the contracting prime
of $G_i$, $0 \leq i < k$. Say that ${\mathcal C}$ is a {\it bottom-up
(contractile) sequence} of $G$ when 
each contracting prime  $H_i$ satisfies: 
$s(H_i)$ is not a descendant of $s(H)$, for any prime $H \neq H_i$
of $G_i$.

The idea of the recognition algorithm then becomes as follows. Let $G$ be a reducible graph. Iteratively, find a lowest
vertex $v$ of $G$, s.t. $v$ is the source of a prime
subgraph $H$ of $G$. Then contract $H$. Stop when noprimes exist any more. 

A complete description of the algorithm is below detailed. The algorithm answers YES or NO, according to respectively $G$ is a Dijkstra graph or not.

\begin{algorithm}\caption{Dijkstra graphs recognition algorithm}\label{alg:recognition}
%\smallskip
$G$, arbitrary flow graph (no labels) \\
Count the number $m$ of edges of $G$. If $m \geq 2n-1$  then {\bf return} NO \\
$E_C$, set of cycle edges of a DFS of $G$, starting at $s(G)$ \\
$v_1, \ldots, v_n$,  topological sorting of $G - E_C$ \\
$i := n$ \\
{\bf while} $i \geq 1$ {\bf do} \\
\hspace*{.5cm} {\bf if} $G$ is the trivial graph  \\
\hspace*{1cm} {\bf then} {\bf return} YES,  {\bf stop} \\
\hspace*{.5cm} if $v_i$ is the source of a prime subgraph $H$ of
$G$ \\
\hspace*{1cm} {\bf then} $G := G \downarrow H$ \\
\hspace*{.5cm} $i : = i-1$ \\
{\bf return} NO
\end{algorithm}

The correctness of Algorithm~\ref{alg:recognition} follows basically from Theorem~\ref{Main-characterization} and Lemma~\ref{lem:generator-generated}. However, the latter relies on the fact that $G$ is a reducible graph, whereas the proposed algorithm considers as input an arbitrary graph. The lemma below justifies that can we avoid the step of recognizing reducible graphs.

\begin{lem}\label{lem:avoid-rec-reducible}
Let $G$ be an arbitrary flow graph input to Algorithm~\ref{alg:recognition}. If $G$ is not a reducible graph then the algorithm would correctly answer NO.
\end{lem}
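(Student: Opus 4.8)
The plan is to show that if the input flow graph $G$ is non-reducible, then Algorithm~\ref{alg:recognition} can never drive $G$ down to the trivial graph, and hence it eventually exits the while loop and returns NO. The key structural fact I would exploit is that contraction of a prime subgraph preserves reducibility: if $G$ were non-reducible, then every graph $G_i$ produced in the contractile sequence is also non-reducible, so in particular no $G_i$ can be the trivial graph (which is trivially reducible). It therefore suffices to prove the claim that \emph{reducibility is invariant under contraction of a prime}, and then observe that the algorithm's YES answer is issued only upon reaching the trivial graph.

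First I would establish the contraction-invariance of non-reducibility. Suppose $G$ is non-reducible, so it contains a cycle $C$ that is \emph{not} single-entry, i.e.\ $C$ has two or more distinct entry vertices (vertices of $C$ reachable from $s(G)$ by a path avoiding the other vertices of $C$). Let $H$ be any prime subgraph contracted by the algorithm, and consider $G \downarrow H$. I would argue that contracting $H$ cannot destroy the multiple-entry character of $C$. By Lemma~\ref{basic}, each statement graph is itself reducible, and by the closedness of $H$ the only vertex of $H$ with in-neighbors outside $H$ is $s(H)$; thus contracting $H$ merely coalesces an internally-reducible block while redirecting external in-edges to the single vertex $s(H)$. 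The image $I_{G\downarrow H}(C)$ of the offending cycle is again a cycle, and I would verify that its (at least two) entry points survive the contraction: an external in-edge $uw$ into a vertex $w \in C$ with $w \notin V(H)$ is untouched, while if $w \in V(H)$ the edge is inherited by $s(H)$, which lies on the contracted cycle. The main obstacle, and the step requiring the most care, is to check that two genuinely distinct entries of $C$ cannot collapse into a single entry of the image cycle — this is where I would lean on the independence/closedness structure of primes and on the fact that a prime's interior contributes no new branching that could merge the separate entry paths. A short case analysis over the alternatives $C \cap V(H) = \emptyset$, $C \subseteq V(H)$, and $C$ partially overlapping $H$ handles this; the middle case is impossible since $H$ is reducible, and in the remaining cases the two entries map to two distinct entries of the image.

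Having shown non-reducibility is preserved, I would conclude as follows. If $G$ is non-reducible, then by induction every graph obtained during the execution of Algorithm~\ref{alg:recognition} via successive contractions $G \downarrow H$ remains non-reducible. The trivial graph is reducible, so no such contracted graph is ever trivial; hence the condition ``$G$ is the trivial graph'' in the while loop never holds, the algorithm never returns YES, and upon termination of the loop it reaches the final line and returns NO. This is the desired correct answer, and it justifies omitting an explicit reducibility-recognition step before running the algorithm. I anticipate the only delicate point is the cycle-entry preservation argument in the overlapping case; everything else is bookkeeping over the definition of contraction and the closedness conditions already available from Lemma~\ref{independent} and Lemma~\ref{prime-preservation}.
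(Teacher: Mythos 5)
Your proposal is correct, but it takes a genuinely different route from the paper's proof. The paper argues locally: it fixes an innermost non-single-entry cycle $C$, with cycle edge $vw$ relative to a DFS, and observes that a cycle edge can only ever be eliminated inside a while or repeat prime; the edge entering $C$ from outside prevents any such subgraph from being closed, so $vw$ (and its images) survives every contraction, the graph never becomes trivial, and the final line of Algorithm~\ref{alg:recognition} returns NO. You instead prove a global invariant --- contraction of a prime preserves non-reducibility --- and conclude by induction that no iterate of the contractile sequence is trivial. Both arguments ultimately rest on the same structural fact (closedness forces all entry into a prime to pass through its source), but they package it differently, and both are sound. Two small refinements to your sketch: in the case $C \subseteq V(H)$, impossibility does not follow from reducibility of $H$ alone (a while/repeat prime does contain a cycle); you need closedness of $H$ in $G$, which makes that internal cycle single-entry \emph{in $G$}, contradicting the choice of $C$. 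And in the overlapping case, the collapse of two entries you flag as delicate cannot occur, because the only vertex of $C \cap V(H)$ that can be an entry is $s(H)$ itself: any path from $s(G)$ to a non-source vertex of $H$ must pass through $s(H) \in C$, so distinct entries never both lie in $V(H)$ and hence never merge under contraction. The trade-off between the two approaches: the paper's proof is shorter but compresses the persistence of the obstruction across iterated contractions into the phrase ``or any of its possible images,'' whereas your preservation lemma makes that induction explicit and reusable, at the cost of the three-way case analysis.
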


\begin{proof}
If $G$ is not a reducible graph let $E_C$ be the set of cycle edges, relative to some DFS startingate $s(G)$. Then $G$ contains some cycle $C$, such that $w$ does not separate $s(G)$ from $v$, where $vw \in E_C$ is the cycle edge of $C$. Without loss of generality,  consider the inner most of these cycles. The only way in which the edge $vw$, or any of its possible images, can be contracted is in context the  of a while or repeat prime subgraph $H$, in which the cycle would be contracted into vertex $w$, or a possible iterated image of it.  However there is no possibility for $H$ to be identified as such, because the edge entering the cycle from outside prevents the subgraph  to be closed. Consequently, the algorithm necessarily would answer NO.
\end{proof}

As for the complexity, first observe that to decide whether the graph contains a non-trivial prime subgraph whose source is a given vertex $v \in V(G)$, we need  $O|(N^+(v)|$ steps. Therefore, when considering all vertices of $G$ we require $O(m)$ time. There can be $O(n)$ prime subgraphs altogether, and each time some prime $H$ is identified, it is contracted, and the size of the graph decreases by $|E(H)|$. The number of steps required to contract a $H$ is $O|E(H)|$. Hence each edge is examined at most a constant number of times during the entire process.  Finding a topological sorting of a graph can be done in $O(m)$.  Thus, the time complexity is $O(m)$, that is, $O(n)$, by Lemma~\ref{bound}.

\section{Isomorphism of Dijkstra Graphs}\label{sec:isomorphism}

In this section, we describe a linear time algorithm for the isomorphism of Dijkstra graphs. 

Given a Dijkstra graph $G$, the general idea consists of defining a code $C(G)$ for $G$, having the following property. For any two Dijkstra graphs $G_1,G_2$, $G_1 \cong G_2$ if and only if $C(G_1) = C(G_2)$.  

As in the recognition algorithm, the codes are obtained by constructing  a bottom-up contractile sequence of each graph.     
The codes refer explicitly to the statement graphs having source $v$ as depicted in Figures~\ref{statementgraphs1} and~\ref{statementgraphs2}, and consist of (linear) strings. 
For a Dijkstra graph $G$, the string $C(G)$ that will be coding $G$ is constructed over an alphabet of symbols containing integers in the range  $\{1, \ldots, \Delta^+(G)+4\}$, where $\Delta^+(G)$ is the maximum cardinality among the out-neighborhoods of $G$. Let, $A,B$ be a pair of strings. The concatenation of $A$ and $B$, denoted $A || B$, is the string formed by $A$, immediately followed by $B$.

In order to define the code $C(G)$ for a Dijkstra graph $G$, we assign an integer, named $type(H)$, for each statement graph $H$, a code $C(v)$ for each vertex $v \in V(G)$, and a code $C(H)$ for each prime subgraph $H$ of a bottom-up contractile sequence of $G$. The code $C(G)$ of the graph $G$ is defined as being that of the source of $G$. For a subset $V' \subseteq V(G)$, the code $C(V')$ of $V'$ is the set of strings $C(V') = \{C(v_i) | v_i \in V'\}$. Write $lex(C(V')) = C(v_1) || ... || C(v_r)$ whenever $V' = \{v_1, \ldots, v_r\}$ and $C(v_i)$ is lexicographically not greater than $C(v_{i+1})$.   

\normalsize
%\small
%\begin{table}[b!]
\begin{table}[h!]\caption{Statement graph types and codes $C(H)$ of prime subgraphs $H$}\label{t:types-primes}
\centering
%\caption{Statement graph types and encodings $C(H)$ of prime subgraphs $H$, with $s(H)=v$}
\begin{tabular}{|c|c|l|}
\hline {\bf statement} &   {{\bf $type(H)$}} & {\bf $C(H), v=s(H)$}\\
{\bf graphs} $H$ & & \\
 \hline 
trivial                     & 1         &
\\ \hline sequence              & 2         & $2 || C(N^+(v))$
\\ \hline if-then                   & 3          & $3 || C(N^+(v))
\setminus N^{+2}(v))|| C(N^{+2}(v))$            \\ \hline while &
4 & $4 || C(N^+(v) \cap N^-(v)) || C(N^+(v) \setminus N^-(v))$
\\ \hline repeat                   & 5          & $5
|| C(N^+(v)) || C(N^{+2}(v) \setminus \{v\})$      \\
\hline if-then-else            & 6          & $6 || lex(C(N^+(v)))
|| C(N^{+2}(v))$                    \\ \hline $p$-case & $p+4$  &
$p+4 || lex(C(N^+(v))) || C(N^{+2}(v))$
\\ \hline
\end{tabular}
\end{table}
\normalsize

Next, we describe how to obtain the actual codes. The types of the the different statement graphs are shown in the second column of Table~\ref{t:types-primes}. For a vertex $v \in V(G)$, the code $C(v)$ is initially set to 1. Subsequently, if $v$ becomes the source of a prime graph $H$, the string $C(v)$ is updated by implicitly assigning 
$C(v) := C(v) || C(H),$
where  $C(H)$ is given by the third column of the table.
% Table~\ref{t:SPtypes}. 
Such an operation is called the {\it expansion} of $v$. 
It follows that $C(H)$ is written in terms of $type(H)$ and the codes of the vertices of $H$, and so on iteratively. 
A possible expansion of some other vertex $w \in V(G)$ could imply in an expansion of $v$, and so iteratively.  
Observe that when $H$ is an if-then-else or a $p$-case graph, we have chosen to place the codes of the out-neighbors of $s(H)$ in lexicographic ordering. For the remaining statement graphs $H$, the ordering of the codes of the out-neighbors of $s(H)$ is also unique and implicitly imposed by $H$. When all primes associated to $C(v)$ have been expanded, $C(v)$ has reached its final value,

\subsection{The Isomorphism Algorithm}

Next, we describe the actual formulation of the algortithm. 

Let $G$ be a DG. Algorithm~\ref{alg:isomorphism} constructs the encoding $C(G)$ for $G$.

\begin{algorithm}[h!]\caption{Dijkstra graphs isomorphism algorithm}\label{alg:isomorphism}
\smallskip
\small $G$, DG; $E_C$, set of cycle edges of $G$\\
%\smallskip
Find a topological sorting $v_1, \ldots, v_n$ of $G-E_C$ \\
{\bf for} $i = n, n-1, \ldots, 1$ {\bf do} \\
\hspace*{.3cm} $C(v_i) := 1$ \\
\hspace*{.3cm} {\bf if} $v_i$ is the source of a prime subgraph $H$ {\bf then} \\
 \hspace*{.7cm} $C(v_i) := C(v_i) || \left\{
\begin{array}{ll}
\displaystyle  2 || C(N^+(v_i)), ~\textrm{if}~H~\textrm{is a sequence graph;} \\
\displaystyle  3 || C(N^+(v_i) \setminus N^{+2}(v_i))||
C(N^{+2}(v_i)), \\
     \hspace{2.9cm}  ~\textrm{if}~H~\textrm{is an if-then graph;} \\
\displaystyle  4 || C(N^+(v_i) \cap N^-(v_i)) || C(N^+(v_i) \setminus N^-(v_i)), \\
    \hspace{2.9cm}  \textrm{if}~H~\textrm{is a while graph,} \\
\displaystyle  5 || C(N^+(v_i)) || C(N^{+2}(v_i) \setminus \{v_i\})
, \\
    \hspace{2.9cm}  \textrm{if}~H~\textrm{is a repeat graph;} \\
\displaystyle  6 || lex(C(N^+(v_i))) || C(N^{+2}(v_i)), \\
    \hspace{2.9cm} \textrm{if}~H~\textrm{is an if-then-else
    graph.} \\
\displaystyle p+4 || lex(C(N^+(v_i))) || C(N^{+2}(v_i)), \\
    \hspace{2.9cm} \textrm{if}~H~\textrm{is a p-case graph.}
\end{array}
\right.
$
$C(G):=C(v_1)$
\end{algorithm}

\normalsize

%Figure~\ref{f:SPexalg}. 

%As an example, we determine the code of the DG $G$ depicted at Figure~\ref{f:SPexalg}.
%The codes of all vertices are initially set to 1.
%Following the bottom-up contractile sequence construction, the following vertices $v_i$ would be iteratively chosen as sources of primes, leading to codes $C(v_i)$, as below: \\

%\noindent source: $v_{10}$:  $C(v_{10}) := 12 || C(v_{14}) = 121$ \\
%source: $v_9$:  $C(v_9) := 16 || lex(C(v_{11}), C(v_{12})) || C(v_{13})= 16111$ \\
%source: $v_4$:  $C(v_4) = 12 || C(v_9) = 1216111$ \\
%source: $v_6$:  $C(v_6) = 12 || C(v_7) = 121$ \\
%source: $v_3$:  $C(v_3) = 13 || C(v_6) ||C(v_8) = 131211$ \\
%source: $v_2$:  $C(v_2) = 14 || C(v_4) || C(v_5) = 1412161111$ \\
%source: $v_1$:  $C(v_1) = 16 || lex(C(v_2), C(v_3)) || C(v_{10}) = 161312111412161111121$ \\
%Finally, $C(G) = C(v_1) = 161312111412161111121$ \\

An example is given in Figure~\ref{f:SPexalg}.

\begin{figure}[h!]
\centering
\includegraphics[scale=0.55]{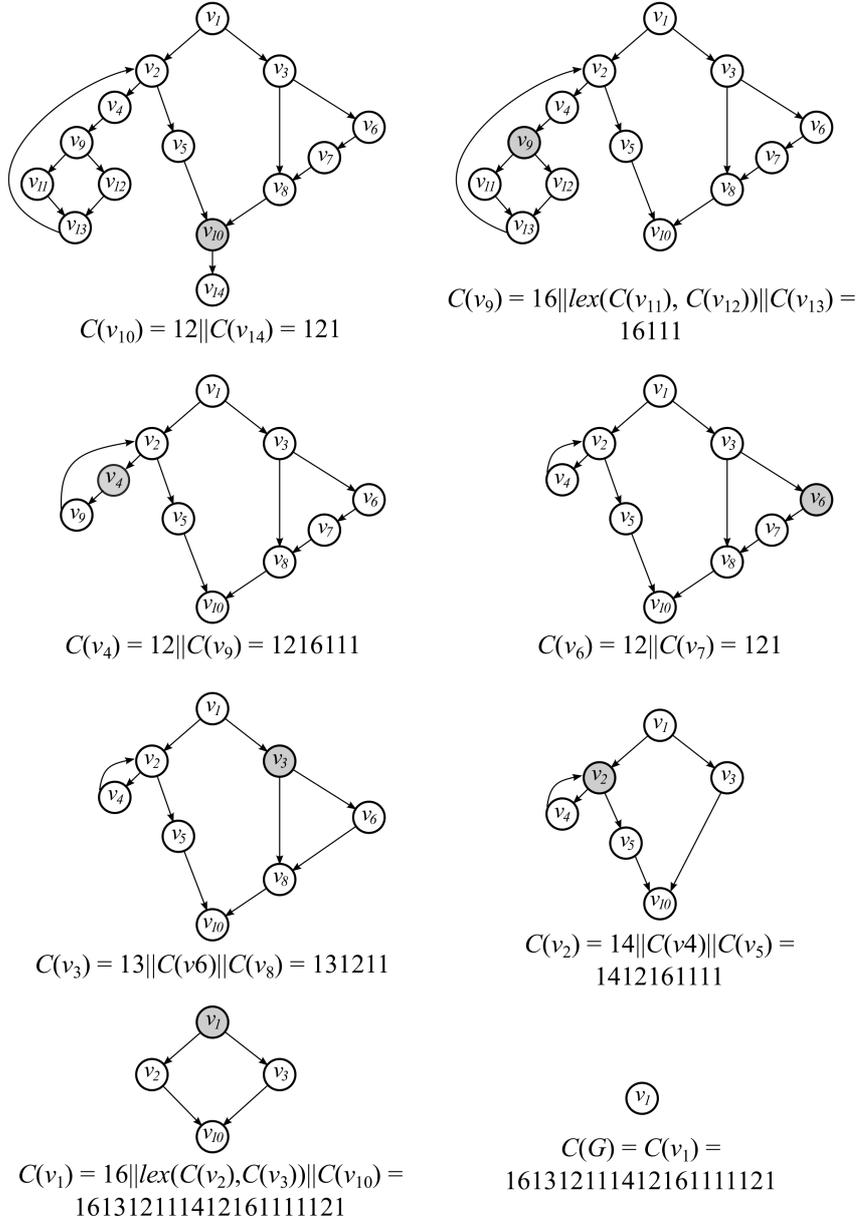}
\caption{Example for isomorphism algorithm}
\label{f:SPexalg}
%\end{figure}
%$G$, Dijkstra graph
\end{figure}

%\begin{figure}[h!]
%\centering
%\begin{minipage}[c]{0.45\textwidth}
%\centering
%\includegraphics[width=\textwidth]{SP_ex}
%\caption*{}
%\end{minipage}
%\begin{minipage}[c]{0.45\textwidth}
%\centering
%\includegraphics[width=\textwidth]{SP_ex1.eps}
%\caption*{$C(v_{10})=12||C(v_{14})=121$}
%\end{minipage}
%\end{figure}

%\begin{figure}[h!]
%\centering
%\begin{minipage}[c]{0.45\textwidth}
%\centering
%\includegraphics[width=\textwidth]{SP_ex2.eps}										
%\caption*{$C(v_{9})=16||(C(v_{11})$, $C(v_{12}))||C(v_{13})=16111$}
%\end{minipage}
%\begin{minipage}[c]{0.45\textwidth}
%\centering
%\includegraphics[width=\textwidth]{SP_ex3.eps}
%\caption*{$C(v_{4})=12||C(v_9)=1216111$}
%\end{minipage}
%\end{figure}

%\begin{figure}[h!]
%\centering
%\begin{minipage}[c]{0.45\textwidth}
%\centering
%\includegraphics[width=\textwidth]{SP_ex4.eps}
%\caption*{$C(v_{6})=12||C(v_7)=121$}
%\end{minipage}
%\begin{minipage}[c]{0.45\textwidth}
%\centering
%\includegraphics[width=\textwidth]{SP_ex5.eps}
%\caption*{$C(v_{3})=13||C(v_6)||C(v_8)=131211$}
%\end{minipage}
%\end{figure}

%\begin{figure}[h!]
%\centering
%\begin{minipage}[c]{0.45\textwidth}
%\centering
%\includegraphics[width=\textwidth]{SP_ex6.eps}
%\caption*{$C(v_{2})=14||C(v_4)||C(v_5) = 1412161111$}
%\end{minipage}
%\begin{minipage}[c]{0.45\textwidth}
%\centering
%\includegraphics[width=\textwidth]{SP_ex7.eps}
%\caption*{$C(v_{1})=16||(C(v_2))$, $C(v_3))||C(v_{10}) = 161312111412161111121$}
%\end{minipage}
%\end{figure}

\subsection{Correctness and Complexity}

\begin{thm}\label{thm:isomorphism}
Let $G, G'$ de Dijkstra graphs, and $C(G), C(G')$ their codes, respectively. Then $G,G'$ are isomorphic if and only if $C(G) = C(G')$.
\end{thm}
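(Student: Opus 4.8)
The plan is to prove both directions of the biconditional, with the forward direction (isomorphism implies equal codes) being the routine half and the reverse direction (equal codes implies isomorphism) being the substantive half. The central observation I would exploit is that Algorithm~\ref{alg:isomorphism} builds the code by following a bottom-up contractile sequence, and by Theorem~\ref{Main-characterization} any two contractile sequences of the same graph terminate in the same trivial graph after the same number $k$ of contractions. Thus the code is essentially a linearization of the recursive tree structure that records, at each contraction, the $type(H)$ of the contracting prime and, recursively, the codes of the vertices of $H$. I would first establish a lemma-style claim that $C(G)$ is well-defined, i.e., independent of which particular bottom-up sequence the algorithm happens to follow; this follows from Lemma~\ref{Commutative-law} (the commutative law for contractions) together with the fact that the lexicographic orderings imposed in the if-then-else and $p$-case cases canonicalize the only genuine symmetry among out-neighbors, while for all other statement graphs the ordering of the out-neighbor codes is forced by the structure (the cycle edge distinguishes while from repeat, and $N^{+2}$ distinguishes the two branches in an if-then graph).

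For the forward direction, suppose $f : V(G) \cong V(G')$ is an isomorphism. I would argue by induction on $k$, the common length of the contractile sequences of $G$ and $G'$, that $f$ maps prime subgraphs to prime subgraphs of the same type and commutes with contraction, so that $f$ descends to an isomorphism $G \downarrow H \cong G' \downarrow f(H)$. Since an isomorphism preserves out-neighborhoods, in-neighborhoods, and cycle edges, it preserves $type(H)$ and maps $N^+(v)$, $N^{+2}(v)$, $N^+(v)\cap N^-(v)$, etc., to the corresponding sets at $f(v)$. By the inductive hypothesis the vertex codes agree under $f$, and since the concatenation rules in Table~\ref{t:types-primes} depend only on these isomorphism-invariant sets (with lexicographic sorting removing any dependence on vertex naming), we obtain $C(v) = C(f(v))$ for every $v$, and in particular $C(G) = C(s(G)) = C(s(G')) = C(G')$.

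For the reverse direction, assume $C(G) = C(G')$. The strategy is to reconstruct an explicit isomorphism by parsing the two equal code strings in parallel, again inducting on $k$. The leading symbol of $C(s(G))$ is $type(H)$ for the final contracting prime $H$ (the outermost statement graph in the top-down refinement), so equality of the first symbols forces $G$ and $G'$ to have the same outermost statement-graph type; the remainder of the string then splits, according to the fixed grammar of Table~\ref{t:types-primes}, into the codes of the vertices of $H$, each of which is itself a valid Dijkstra-graph code of strictly smaller length. I would define $f$ on $s(H)$ and $t(H)$ by matching the corresponding distinguished vertices of $H$ in $G'$, and extend $f$ over each expansible vertex of $H$ by invoking the inductive hypothesis on the smaller codes (using the lexicographic sorting to pair up the out-neighbors in the if-then-else and $p$-case cases). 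Gluing these partial isomorphisms along the shared source/sink vertices — which is consistent precisely because primes meeting in a contractile sequence are independent by Lemma~\ref{independent} — yields the desired $f : V(G) \cong V(G')$.

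The main obstacle I anticipate is the parsing/unique-decodability step in the reverse direction: I must verify that the code string can be unambiguously decomposed into the type symbol followed by the substrings coding the individual vertices of the prime, i.e., that the grammar generating these codes is prefix-free or otherwise uniquely parsable. Because each vertex code begins with $1$ and is extended by a $type$ symbol drawn from $\{2,\ldots,\Delta^+(G)+4\}$, and because the number of out-neighbor codes to be read off at each level is dictated by the type symbol itself (e.g.\ an if-then-else contributes exactly the codes of its out-neighbors followed by the single $N^{+2}$ code), the decomposition should be forced — but making this rigorous requires a careful argument that no two distinct parse trees yield the same flat string, which is where most of the real work lies.
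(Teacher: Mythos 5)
Your proposal is correct in substance, and it splits from the paper exactly at the converse. The forward direction is essentially the paper's own argument: both proceed by induction on the common number $k$ of contractions (invariant by Theorem~\ref{Main-characterization}), using the fact that an isomorphism carries each prime $H$ of $G$ to a prime $H'\cong H$ of $G'$ rooted at $f(s(H))$ and commutes with contraction; the paper merely phrases this as string surgery (compress the substring of $C(G)$ encoding $H$ into a single $1$ at the position of $s(H)$, note that the result is exactly $C(G\downarrow H)$, apply the induction hypothesis --- strengthened, just as yours is, to say that $1$'s in equal relative positions correspond under the isomorphism --- and re-expand), whereas you phrase it as invariance of the neighborhood sets appearing in Table~\ref{t:types-primes}. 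For the converse, the paper says only that it ``is similar,'' implicitly meaning the same bottom-up surgery run backwards: locate a substring consisting of a type symbol followed by the appropriate number of $1$'s, which must encode an innermost prime of the same type in both graphs, contract both graphs and both strings, and induct. You instead parse top-down from the outermost statement graph and glue partial isomorphisms across shared sources/sinks via Lemma~\ref{independent}. Your route buys two things the paper leaves implicit: an explicit reconstruction of the isomorphism function (which is what Corollary~\ref{cor:1-1} asserts), and the observation that one must first know $C(G)$ is well defined independently of the chosen bottom-up sequence --- a point the paper never addresses and which, as you say, rests on Lemma~\ref{Commutative-law} together with Theorem~\ref{Main-characterization} and Lemma~\ref{lem:generator-generated}.

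The obstacle you single out as ``where most of the real work lies'' is in fact routine. The code grammar is $V \to 1 \mid 1\,P$ and $P \to \tau\, V^{n(\tau)}$, where the arity $n(\tau)$ is a function of the type symbol alone: $n(2)=1$, $n(\tau)=2$ for $\tau\in\{3,4,5\}$, and $n(\tau)=\tau-3$ for $\tau\ge 6$ (if-then-else and $p$-case). After reading the mandatory leading $1$ of a vertex code, the next symbol is either another $1$, a type symbol $\ge 2$, or the end of the string, and these cases are disjoint; so parsing is deterministic with one symbol of lookahead, and a short induction on string length shows no two distinct parse trees flatten to the same string. One small slip to fix: the leading symbol of $C(s(G))$ is $1$, not $type(H)$; the type of the outermost prime is the second symbol of the code.
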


\begin{proof}
By hypothesis,  $G, G'$ are isomorphic. We show that it implies $C(G) = C(G')$.  Following the isomorphism algorithm, observe that the number of 1's in the strings $C(G), C(G')$ represents the number of vertices of $G,G'$, respectively, 
whereas each integer $> 1$ in the strings, represents the contraction of a prime subgraph. Furthermore, each prime subgraph $H$, which is initially contained in the input graph $G$, corresponds in $C(G)$, to a substring formed by the integer $type(H)$ followed by one 1, if $type(H) = 2$; or two 1's, if $type(H)=3$; or three 1's, if $ 4 \leq type(H) \leq 6$; or $type(H)+1$ 1's, if $type(H) > 6$; respectively. Clearly, the same holds for the graph $G'$ and its code $C(G')$.  The proof is by induction on the number $k$ of contractions needed to reduce both $G$ and $G'$ to a trivial vertex. By Theorem~\ref{Main-characterization}, $k$ is invariant and applies for both graphs $G$ and $G'$. If $k = 0$ then both $G$ and $G'$ are trivial graphs, and the  theorem holds, since $C(G) = C(G') = 1$. When $k > 0$, assume that if $G_-$ and $G_-'$ are isomorphic  DG graphs which require less than $k$ contractions for reduction then $C(G_-) = C(G_-')$. Furthermore, assume also by the induction hypothesis, that if $v,v'$ are vertices of $G_-,G_-'$,  corresponding to 1's at the same relative positions in $C(G)$ and $C(G_-)$, respectively, then $v' = f(v)$, where $f$ is the isomorphism function between $G_-$ and $G_-'$. Now, consider the graphs $G$ and $G'$. Choose a prime subgraph $H$ of $G$, and let $v = s(H)$. Let $v' = f(v)$ be a vertex of $G'$ corresponding to $v$ by the isomorphism. Since $G \cong G'$, it follows that $v'$ is the source of a prime subgraph $H'$ of $G'$. Moreover $H \cong H'$. Consider the contractions $G \downarrow H$ and $G' \downarrow H'$, leading to graphs $G_-$ and $G_-'$, respectively. Let $C_-(G)$ and $C_-(G')$ be the strings obtained from $C(G)$ and $C(G')$, respectively by contracting the substrings corresponding to $H$ and $H'$, as above. That is, all the 1's of $C(H)$ and $C(H')$ are compressed into the positions of $v=s(H)$ and  $v'=s(H')$, respectively, while the integers $type(H)$ and $type(H')$ become 1, maitaining their original positions.  It follows that $C(G_-) = C_-(G)$ and $C(G_-') = C_-(G')$. By the induction hypothesis $C(G_-) = C(G_-')$ and the 1's corresponding to $v$ and $v'$ lie in the same relative positions in the strings. Consequently, by replacing the latter 1's for the substrings which originally represented $H$ and $H'$, we conclude that indeed $C(G) = C(G')$, and moreover the induction hypothesis is still verified. The converse is similar.
\end{proof}

The corollaries below are direct consequences of Theorem~\ref{thm:isomorphism}.

\begin{cor}\label{cor:unique}
Let $G$ be a DG. The following affirmatives hold. 
\begin{enumerate}
\item There is a one-to-one correspondence between the 1's of $C(G)$ and the vertices of $G$. 
\item The code $C(G)$ of G  is unique and is a representation of $G$.
\end{enumerate}
\end{cor}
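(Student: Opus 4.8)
\textbf{Proof proposal for Corollary~\ref{cor:unique}.}

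The plan is to derive both affirmatives as immediate consequences of Theorem~\ref{thm:isomorphism} together with the structural observations already established about the code $C(G)$. For the first affirmative, I would invoke the bookkeeping observation used in the proof of Theorem~\ref{thm:isomorphism}: throughout Algorithm~\ref{alg:isomorphism}, each vertex $v_i$ is initialized with $C(v_i) := 1$, so that exactly one symbol $1$ is introduced per vertex. The subsequent expansions only append substrings of the form $type(H) || \cdots$ (with $type(H) > 1$) and never create or destroy an isolated $1$ corresponding to a vertex; the $1$'s appearing inside an appended $C(H)$ are themselves the codes of the other vertices of $H$, already accounted for. Hence I would argue, by induction on the number $k$ of contractions exactly as in Theorem~\ref{thm:isomorphism}, that the total count of $1$'s in $C(G)$ equals $n$, and that the correspondence sending each $1$ to the vertex whose initialization produced it is a bijection. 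This is essentially a restatement of the induction hypothesis already carried in the proof of Theorem~\ref{thm:isomorphism}, namely that $1$'s at given relative positions correspond to specific vertices under the isomorphism.

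For the second affirmative, uniqueness of $C(G)$ must be addressed first, since Algorithm~\ref{alg:isomorphism} depends on a chosen topological sorting of $G - E_C$ and a chosen bottom-up contractile sequence. The key point is that $C(G)$ does not depend on these choices. I would argue this by applying Theorem~\ref{thm:isomorphism} to $G$ against itself: any two runs of the algorithm produce codes of isomorphic graphs (indeed of the same graph $G \cong G$), and by the commutativity of contractions (Lemma~\ref{Commutative-law}) together with Theorem~\ref{Main-characterization}, the sequence of primes contracted is the same up to the order forced by the bottom-up rule. Since the codes $C(H)$ for if-then-else and $p$-case graphs are explicitly placed in lexicographic order, and for all other statement graphs the ordering is uniquely imposed by the graph structure, the resulting string is independent of the tie-breaking in the topological sort. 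Thus $C(G)$ is well-defined, i.e.\ unique.

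That $C(G)$ \emph{represents} $G$ then follows directly from Theorem~\ref{thm:isomorphism}: the map $G \mapsto C(G)$ is injective on isomorphism classes, since $C(G) = C(G')$ implies $G \cong G'$. Combined with the first affirmative, the code encodes both the vertex set (via the $1$'s) and the full adjacency structure (via the $type$ integers recording which prime was contracted at each vertex and in which order), so $G$ can in principle be reconstructed from $C(G)$ up to isomorphism. The main obstacle I anticipate is making the uniqueness claim fully rigorous: one must verify carefully that the bottom-up constraint plus the lexicographic conventions remove \emph{all} freedom in the output string, not merely the freedom in which prime is contracted. This amounts to checking that whenever two vertices are simultaneously eligible as prime sources, the order in which their codes are emitted is forced by the enclosing statement graph's fixed ordering, so no genuine ambiguity survives. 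Once this is established, both parts of the corollary follow without further computation.
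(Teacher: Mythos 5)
Your proposal is correct and takes essentially the same approach as the paper: the paper offers no separate proof, stating the corollary as a direct consequence of Theorem~\ref{thm:isomorphism}, and your derivation---counting the 1's via the bookkeeping already carried in that theorem's induction, and obtaining uniqueness by applying the theorem to $G$ against an isomorphic copy of itself so that any two runs of Algorithm~\ref{alg:isomorphism} must emit equal strings---is exactly the intended reasoning. The residual concern you flag (that the bottom-up rule plus the lexicographic conventions remove all freedom in the output) is the right point to be careful about, but it is precisely what the forward direction of Theorem~\ref{thm:isomorphism}, resting on the invariance given by Theorem~\ref{Main-characterization}, already guarantees.
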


\begin{cor}\label{cor:1-1}
Let $G,G'$ be DGs and $C(G),C(G')$ their corresponding codes, satisfying $C(G) = C(G')$. Then an isomorphism function $f$ between $G$ and $G'$ can be determined as follows. Let $v \in V(G)$ and $v' \in V(G')$ correspond to 1's at identical relative positions in $C(G)$ and $C(G')$, respectively. Define $f(v) := v'$. 
\end{cor}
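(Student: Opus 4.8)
The plan is to prove that the rule defining $f$ produces a well-defined bijection and then that this \emph{particular} bijection preserves adjacency in both directions. For well-definedness, note that $C(G) = C(G')$ means the two strings are literally identical, so their 1's occur at exactly the same positions and in equal number. By Corollary~\ref{cor:unique}(1), each 1 of $C(G)$ corresponds to a unique vertex of $G$ and each 1 of $C(G')$ to a unique vertex of $G'$; matching the 1's by position therefore induces a bijection $f \colon V(G) \to V(G')$. By Theorem~\ref{thm:isomorphism} we already know $G \cong G'$, so the real content of the corollary is that this specific positional $f$ is an isomorphism.

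To establish edge preservation I would induct on the number $k$ of prime contractions needed to reduce the graphs to a single vertex, which by Theorem~\ref{Main-characterization} is the same for $G$ and $G'$, paralleling the induction in the proof of Theorem~\ref{thm:isomorphism}. For $k = 0$ both graphs are trivial, $f$ is the unique bijection between single vertices, and there are no edges to check. For $k > 0$, I would parse the first prime recorded in $C(G)$: it is a substring beginning with some integer $type(H) > 1$ sitting at the position of a 1 that marks $v = s(H)$. Because $C(G) = C(G')$, the identical substring appears at the same position in $C(G')$, so $f(v)$ is the source of a prime $H'$ of $G'$ with $type(H') = type(H)$, whence $H \cong H'$. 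Contracting both yields $G \downarrow H$ and $G' \downarrow H'$; their codes $C_-(G)$ and $C_-(G')$ are obtained from $C(G), C(G')$ by compressing the substrings of $H, H'$ into a single 1 at the positions of $v, f(v)$ and resetting the leading integer to 1 in place. Since this compression deletes only the \emph{interior} 1's of the prime and leaves all remaining 1's in their relative order, we have $C_-(G) = C_-(G')$, and the positional matching on the contracted graphs is exactly the restriction of $f$ to the surviving vertices. By induction this restriction is an isomorphism $G \downarrow H \cong G' \downarrow H'$.

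It then remains to recover the adjacency of $G$ from that of $G \downarrow H$ together with the correspondence $H \cong H'$, and to check that both agree with $f$. The edges of $G$ are of two kinds: those internal to $H$ and those already present in $G \downarrow H$ (the latter including the edges joining the boundary vertices $s(H), t(H)$ to the rest of the graph, recorded by the in- and out-neighbor codes in the contracted code). External edges correspond under $f$ by the inductive isomorphism of the contracted graphs, using that the non-$H$ vertices keep their 1-positions. For the internal edges I would invoke that the edge set of a statement graph is completely determined by its $type(H)$ and by the ordering in which the codes of the out-neighbors of $s(H)$ are listed in $C(H)$ (Table~\ref{t:types-primes}); since $type(H) = type(H')$ and $C(H) = C(H')$ dictate the same listing, matching the interior 1's of $C(H)$ with those of $C(H')$ by position identifies the vertices of $H$ with those of $H'$ so that internal edges are preserved. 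Hence $f$ preserves and reflects all edges.

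The main obstacle is the two symmetric statement graphs, if-then-else and $p$-case, whose out-neighbors are listed in lexicographic order of their codes, so that ties can arise when two branches carry equal codes, making the vertex-level identification furnished by positional matching \emph{a priori} ambiguous. I would resolve this by noting that equal codes mean isomorphic branch-substructures, and that the lexicographic convention fixes the same relative order of these branches in $C(G)$ and in $C(G')$; positional matching therefore sends the $j$-th branch of $v$ to the $j$-th branch of $f(v)$, a map that is an isomorphism of each branch by the induction hypothesis applied to the equal branch-codes. The residual freedom in assigning equal-coded branches to one another is harmless precisely because such branches are mutually isomorphic, so every positional choice is simultaneously an isomorphism. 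Verifying this ordering consistency --- that compression preserves relative 1-positions and that the lexicographic tie-breaking is applied identically on both sides --- is the one place demanding care; once it is in hand, Corollary~\ref{cor:1-1} follows.
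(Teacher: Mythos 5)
Your proposal is correct and follows essentially the same route as the paper: the paper treats this corollary as an immediate consequence of Theorem~\ref{thm:isomorphism}, whose inductive proof already carries the positional correspondence as part of its induction hypothesis (vertices whose 1's occupy the same relative positions in the codes are matched by the isomorphism, with the prime's substring compressed into a single 1 at the source's position at each contraction step), which is exactly the induction you reconstruct. Your explicit treatment of the lexicographic ties among equal-coded branches of if-then-else and $p$-case primes --- observing that equal codes force mutually isomorphic branch substructures, so any positional assignment among them is an isomorphism --- is a worthwhile elaboration of a point the paper leaves implicit, but it does not change the underlying argument.
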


Finally, consider the complexity of the isomorphism algorithm. 

\begin{lem}\label{lem:length}
Let $G$ be a Dijkstra graph, and $C(G)$ its code. 
Then $|C(G)| = n + k \leq 2n -1$, where $n$ is the number of vertices of $G$ and $k$ the number of contractions needed to reduce it to a trivial vertex. 
\end{lem}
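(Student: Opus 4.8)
The plan is to split $|C(G)|$ into two contributions and evaluate each separately. Since the alphabet of $C(G)$ consists solely of integers, every symbol of the string is either a $1$ or an integer strictly greater than $1$, so $|C(G)|$ equals the number of $1$'s plus the number of symbols exceeding $1$. For the first count, I would invoke Corollary~\ref{cor:unique}, which already establishes a one-to-one correspondence between the $1$'s of $C(G)$ and the vertices of $G$; hence there are exactly $n$ symbols equal to $1$. For the second count, I would observe, as noted in the proof of Theorem~\ref{thm:isomorphism}, that an integer exceeding $1$ is emitted precisely when the algorithm expands a vertex that is the source of a prime subgraph, namely as the leading symbol $type(H)$ of the appended string $C(H)$. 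Each such expansion corresponds to exactly one contraction in the bottom-up contractile sequence of $G$, and conversely; so the number of symbols exceeding $1$ is exactly $k$. Combining the two counts yields $|C(G)| = n + k$.

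It remains to bound $k$. Here I would argue by vertex counting along a contractile sequence $G_0, \ldots, G_k$ with $G_0 \cong G$ and $G_k$ trivial, whose existence and whose invariant length $k$ are guaranteed by Theorem~\ref{Main-characterization}. Each step $G_{i+1} \cong G_i \downarrow H_i$ contracts a non-trivial prime $H_i$, and since every non-trivial statement graph has at least two vertices, the contraction coalesces $|V(H_i)| \geq 2$ vertices into the single vertex $s(H_i)$, decreasing the order of the graph by $|V(H_i)| - 1 \geq 1$. Summing over all $k$ steps, the total decrease equals $n - 1$ (from $n$ vertices down to the single vertex of $G_k$), so $\sum_{i=0}^{k-1} (|V(H_i)| - 1) = n - 1$. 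As each summand is at least $1$, this forces $k \leq n - 1$.

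Putting the two parts together gives $|C(G)| = n + k \leq n + (n-1) = 2n - 1$, as claimed. I do not expect any serious obstacle: the equality $|C(G)| = n + k$ is essentially a bookkeeping consequence of Corollary~\ref{cor:unique} together with the structure of the code emitted by Algorithm~\ref{alg:isomorphism}, and the inequality $k \leq n - 1$ is a direct vertex-count. The only point requiring a little care is confirming that each contraction contributes exactly one symbol exceeding $1$ (its type) and no spurious extra $1$'s, so that the symbols of $C(G)$ are partitioned cleanly between vertices and contractions; this is exactly what the accounting already carried out in the proof of Theorem~\ref{thm:isomorphism} provides.
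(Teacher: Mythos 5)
Your proof is correct, and it diverges from the paper's in a meaningful way on the inequality. Both proofs share the equality part: the paper likewise observes that $C(G)$ consists of exactly $n$ symbols equal to $1$ (one per vertex) plus a multiset $U$ of symbols in $\{2,\ldots,\Delta^+(G)+4\}$, one per contraction, so $|C(G)| = n+k$. Where you differ is in bounding $k$. The paper argues \emph{syntactically}, on the string itself: it asserts that $C(G)$ begins and ends with a $1$ and never contains two consecutive elements of $U$, so each element of $U$ can be matched to a distinct following $1$, forcing $|U| \leq n-1$; it then notes that the directed path $P_n$ attains $|C(P_n)| = 2n-1$, showing the bound is tight. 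You argue \emph{combinatorially}, on the contractile sequence: each contraction of a non-trivial prime $H_i$ coalesces $|V(H_i)| \geq 2$ vertices into one, so the order drops by at least $1$ per step, and since the total drop from $n$ to $1$ is $n-1$, we get $k \leq n-1$ directly. Your route has the advantage of not relying on the adjacency property of the code (``no two consecutive elements of $U$''), which the paper asserts without proof and which itself requires an induction on the nesting structure of vertex codes to justify; the vertex count is immediate from the definition of contraction and from Theorem~\ref{Main-characterization}, which guarantees $k$ is well defined. What you give up is the tightness observation (the $P_n$ example), but the lemma as stated only claims the upper bound, so nothing is missing.
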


\begin{proof}
The encoding $C(G)$ consists of exactly $n$ 1's, together with elements of a multiset $U \subseteq \{2,3, \ldots, \Delta^+(G) + 4\}$. We know that $C(G)$ starts and ends with an 1, and it contains no two consecutive elements of $U$. Therefore $C(G) \leq 2n -1$. When $G$ consists of the induced path $P_n$, it follows $|C(P_n)| = 2n-1$, attaining the bound. 
\end{proof}

\begin{thm}\label{thm:complexity-iso} The isomorphism algorithm terminates within $O(n)$ time.
\end{thm}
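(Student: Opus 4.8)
The plan is to show that Algorithm~\ref{alg:isomorphism} runs in $O(n)$ time by bounding the total work across all iterations of the main loop. The key structural facts I would rely on are Lemma~\ref{bound} (so $m \leq 2n-2$ and hence $O(m) = O(n)$) and Lemma~\ref{lem:length} (so the final code $C(G)$ has length at most $2n-1$, i.e.\ $O(n)$). The overall strategy is identical in spirit to the complexity analysis already given for the recognition algorithm at the end of Section~\ref{algorithm}: account for the cost of the preprocessing (DFS for cycle edges and topological sort), then amortize the per-prime work over the edges of the graph.

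First I would handle the preprocessing steps. Computing the set $E_C$ of cycle edges via a single DFS starting at $s(G)$ takes $O(m)$ time, and finding a topological sort of the acyclic graph $G - E_C$ likewise takes $O(m)$ time; by Lemma~\ref{bound} both are $O(n)$. Next I would analyze the main loop. For each vertex $v_i$ processed in reverse topological order, the algorithm must decide whether $v_i$ is the source of a prime subgraph $H$, which (as argued for the recognition algorithm) can be checked in $O(|N^+(v_i)|)$ time by inspecting the out-neighborhood and, where relevant, the second out-neighborhood $N^{2+}(v_i)$. Summed over all vertices this is $O(m) = O(n)$.

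The main obstacle, and the part deserving the most care, is bounding the cost of building the code strings themselves. Whenever $v_i$ turns out to be the source of a prime $H$, the algorithm appends $C(H)$ to $C(v_i)$, where $C(H)$ is assembled from $type(H)$ and the codes of the out-neighbors (and second out-neighbors) of $v_i$, possibly after a lexicographic sort in the if-then-else and $p$-case cases. I would argue that the number of primes contracted is at most $k \leq n-1$, and that the substring contributed by each prime $H$ has length proportional to the number of new integers plus $1$'s it introduces; by Lemma~\ref{lem:length} the grand total length of $C(G)$ is at most $2n-1$. The delicate point is that appending codes of out-neighbors must not recopy already-built substrings at full length on each contraction, which would inflate the cost; the concatenations must be realized so that each symbol of the final code is written $O(1)$ times (for instance, by maintaining the codes as lists whose segments are spliced by reference rather than copied, or by observing that each prime's source accumulates its neighbors' codes exactly once when that source is expanded). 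Under this accounting each edge of $G$ is examined a constant number of times across the entire execution, exactly as in the recognition analysis.

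Finally I would address the lexicographic sorting required for the if-then-else and $p$-case graphs. Sorting the codes of the out-neighbors of a source could naively cost more than linear time if the strings being compared are long; however, since the total length of all the codes is $O(n)$ and the number of out-neighbors sorted at each prime is $|N^+(v_i)|$ with $\sum_i |N^+(v_i)| = m = O(n)$, a careful radix-style or bucket-based sort keyed on the already-computed integer types keeps the cumulative sorting cost at $O(n)$ as well. Combining the $O(n)$ preprocessing, the $O(n)$ prime-detection, the $O(n)$ code-assembly, and the $O(n)$ sorting, the algorithm terminates in $O(n)$ time, as claimed.
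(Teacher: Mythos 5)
Your proposal is correct and follows essentially the same route as the paper's own proof: both rely on Lemma~\ref{bound} for $m = O(n)$, on Lemma~\ref{lem:length} for the $O(n)$ bound on total code length, on amortizing the per-prime work over the edges, and on the fact that lexicographic ordering can be done in time linear in the total length of the strings sorted. In fact your treatment is more careful than the paper's on two points it merely asserts --- that concatenations must be realized by splicing rather than copying so each symbol is written $O(1)$ times, and that the sorting must be radix/bucket-style --- so your write-up fills in exactly the details the published proof leaves implicit.
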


\begin{proof}
Recall that $m = O(n)$, by Lemma~\ref{bound}.  The construction of a bottom-up contractile sequence requires 
 $O(n)$ steps.  For each $v \in V(G)$, following the isomorphism algorithm, $C(v)$ can be constructed in time $|C(v)|$. We remark that 
lexicographic ordering takes linear time on the total length of the strings to be sorted. It follows that the algorithm requires no more than $O(n)$ time to construct the code $C(G)$ of $G$. 
\end{proof}

\section{Conclusions}\label{conclusions}

The analysis of control flow graphs and different forms of structuring  have been considered in various papers. 
To our knowledge, no full characterization and no recognition algorithm for control flow graphs of structured programs have been described before.  There are some related classes for which characterizations and efficient recognition algorithms do exist, e.g.~the classes of reducible graphs and D-charts. However, both contain and  are much larger than Dijkstra graphs.   

An important question solved in this paper is that of recognizing whether two control flow graphs (of structured programs) are syntactically equivalent, i.e., isomorphic. Such question fits in the area of  \emph{code similarity analysis}, with applications in clone detection, plagiarism and software forensics.

Since the establishment of structured programming, some new statements have been proposed to add to the original structures which forms the classical structured programming, enlarging the collection of allowed statements. Some of such relevant statements are depicted in Figures~\ref{fig:gDG}.
\begin{enumerate}[(a)]
	\item {\it break-while}: 
	Allows an early exit from a while statement;
	\item {\it continue-while}:
	Allows a while statement to proceed, after its original termination;
	\item {\it break-repeat}: 
	Allows an early exit from a repeat statement;
	\item {\it continue-repeat}: 
	Allows a repeat statement to proceed, after its original termination;
	\item {\it divergent-if-then-else}: 
	A selection statement, similar to the standard {\it if-then-else}, except that the \-comparisons do not converge 
afterwords to a same point, but lead to disjoint structures. Note that the corresponding graph has no longer a (unique) sink. 
\end{enumerate}

\begin{figure}
    \centering \includegraphics[scale=0.35]{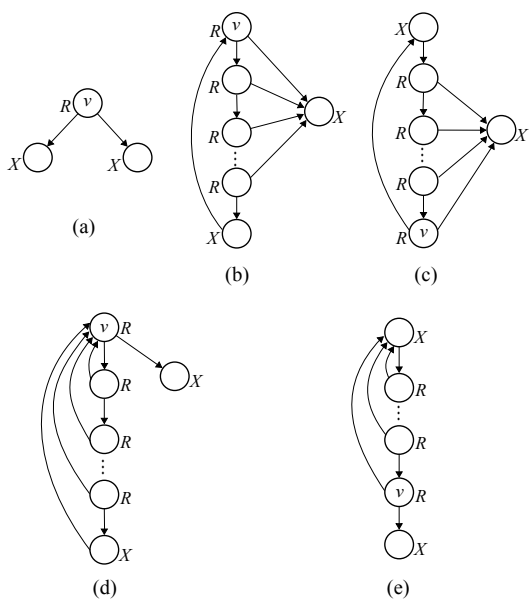}
    %\caption{}
    %\label{fig:gDG1}
    \caption{Generalized Dijkstra graphs}\label{fig:gDG}
\end{figure}

%\begin{figure}%[h]
%    \centering \includegraphics[scale=0.5]{generalized_dijkstra_graphs1}
%    \caption{}
%    \label{fig:gDG1}
%\end{figure}

%\begin{figure}%[h]
%    \centering \includegraphics[scale=0.5]{generalized_dijkstra_graphs2}
%    \caption{}
%    \label{fig:gDG2}
%\end{figure}

In fact, the inclusion of some of the above additional control blocks in structured programming has been already predicted in some papers, as~\cite{K74}. The basic ideas and techniques described in the present work can be generalized, so as to efficiently recognize graphs that incorporate the above statements, in addition to those of Dijkstra graphs. Similarly, for the isomorphism algorithm.

\vspace{2cm}

\small
\centerline{Acknowledgments} 
The authors are grateful to Victor Campos for the helpful discussions and comments during the French-Brazilian Workshop of Graphs and Optimizations, in Redonda, CE, Brazil, 2016. He pointed out the possibility of decreasing the complexity of the recognition algorithm from $O(n^2)$ to $O(n)$. 
\normalsize

%%
%% Bibliography
%%

%% Either use bibtex (recommended), but commented out in this sample

%\bibliography{dummybib}

%% .. or use bibitems explicitely

%\nocite{Simpson}

\end{document}